\newtheorem{thm}{Theorem}[section]\crefname{thm}{Theorem}{Theorems}
\newtheorem{lem}[thm]{Lemma}
\newtheorem{cor}[thm]{Corollary}
\theoremstyle{definition}
\numberwithin{equation}{section}
\newcommand{\Ex}{\mathop{\bf E\/}}
\DeclareMathOperator{\tr}{tr}
\DeclareMathOperator{\rk}{rk}
\DeclareMathOperator{\Sym}{Sym}
\DeclareMathOperator{\Lin}{L}
\DeclareMathOperator{\U}{U}
\DeclareMathOperator{\GL}{GL}
\DeclareMathOperator{\Sp}{Sp}
\DeclareMathOperator{\Mp}{Mp}
\DeclareMathOperator{\ASp}{ASp}
\DeclareMathOperator{\SO}{SO}
\DeclareMathOperator{\Spin}{Spin}
\DeclareMathOperator{\PSD}{P}
\renewcommand{\O}{\operatorname{O}}
\renewcommand{\S}{\operatorname{S}}
\renewcommand{\L}{\operatorname{L}}
\DeclarePairedDelimiter\abs{\lvert}{\rvert}
\DeclarePairedDelimiter\parens{\lparen}{\rparen}
\newcommand{\R}{\mathbbm R}
\newcommand{\RR}{\mathbbm R}
\newcommand{\C}{\mathbbm C}
\newcommand{\E}{\mathcal E}
\newcommand{\cA}{\mathcal A}
\newcommand{\cB}{\mathcal B}
\newcommand{\cP}{\mathcal P}
\newcommand{\cH}{\mathcal H}
\newcommand{\cK}{\mathcal K}
\newcommand{\cL}{\mathcal L}
\newcommand{\cR}{\mathcal R}
\newcommand{\cV}{\mathcal V}
\newcommand{\cW}{\mathcal W}
\newcommand{\cG}{\mathcal G}
\newcommand{\so}{\mathfrak{so}}
\newcommand{\ot}{\otimes}
\newcommand{\eps}{\varepsilon}
\newcommand{\bit}{\{0,1\}}
\newcommand{\id}{I}
\newcommand{\idCh}{\mathcal I}
\newcommand*{\tran}{^{\mkern-1.5mu\mathsf{T}}}
\newcommand{\dd}{\mathrm{d}}
\newcommand{\bigO}{\mathcal O}
\newcommand{\ketbra}[2]{\mathinner{\lvert#1\rangle\!\langle#2\rvert}}
\newcommand{\proj}[1]{\mathinner{\lvert#1\rangle\!\langle#1\rvert}}
\begin{document}

%=============================================================================
\title{A random purification channel for arbitrary symmetries with~applications to fermions and bosons}
\author{Michael Walter\thanks{LMU Munich and University of Amsterdam \& QuSoft, Amsterdam, \href{mailto:michael.walter@lmu.de}{michael.walter@lmu.de}} \and Freek Witteveen\thanks{CWI \& QuSoft, Amsterdam, \href{mailto:}{f.witteveen@cwi.nl}}}
\date{}
\maketitle
%=============================================================================
\begin{abstract}
The random purification channel maps $n$ copies of any mixed quantum state~$\rho$ to $n$ copies of a random purification of~$\rho$.
We generalize this construction to arbitrary symmetries:
for any group~$G$ of unitaries, we construct a quantum channel that maps states contained in the algebra generated by~$G$ to random purifications obtained by twirling over~$G$.
In addition to giving a surprisingly concise proof of the original random purification theorem, our result implies the existence of fermionic and bosonic Gaussian purification~channels.
As applications, we obtain the first tomography protocol for fermionic Gaussian states that scales optimally with the~number~of~modes and the error, as well as an improved property test for this class of states.
\end{abstract}

%=============================================================================
\section{Introduction}
%=============================================================================
The church of the larger Hilbert space teaches that one can interpret any mixed state as the reduced state of a pure state, and that similarly any process can be purified by the Stinespring extension.
Indeed, for any mixed state~$\rho$ on a quantum system with Hilbert space $\cH$, there exists a pure state~$\ket{\psi_\rho}$ on $\cH \ot \cH'$, where the reference system $\cH'$ is a copy of $\cH$, such that the reduced state on~$\cH$ equals~$\rho$.
This purification is not unique: applying an arbitrary unitary~$U \in \U(\cH')$ to the reference system yields another purification, and in fact any two purifications of~$\rho$ differ by such a unitary.
This is an extremely useful principle in quantum information theory, where it is often much easier to reason about pure states than about mixed states.

How does one put this creed into practice?
Recently, it has been shown that purifications not only exist, but can in fact be constructed in a meaningful way by a \emph{quantum channel} \cite{tang2025conjugate,pelecanos2025mixed,girardi2025random}, see also \cite{soleimanifar2022testing,chen2024local}.
While no quantum channel can assign a fixed purification to every quantum state, there does exist a \emph{random purification channel} that prepares a purification with a random unitary applied to the reference system.
Crucially, it can be extended to multiple copies of the state, such that it does so consistently.
That is, the random purification channel $\cP_{n,d}$ acts on $n$ copies of a quantum system with Hilbert space $\cH = \C^d$, and it has the property that for every state $\sigma$ on $\cH$,
\begin{align}\label{eq:random purification intro}
    \cP_{n,d}[\sigma^{\ot n}] = \int_{\U(d)} \left((I \ot U) \psi^\text{std}_\sigma (I \ot U^\dagger)\right)^{\ot n} \, \dd U = \Ex_{\psi_\sigma \text{ purification of } \sigma } \psi_\sigma^{\ot n},
\end{align}
where $\psi^\text{std}_\sigma$ is a fixed (standard) purification of $\sigma$.%

This construction has useful applications.
In particular, it has been applied in \cite{pelecanos2025mixed} to improve tomography for mixed states.
The idea is that if one has a tomography protocol for pure states, then this automatically extends to a tomography protocol for mixed states, as follows:
\begin{enumerate}[noitemsep]
    \item Given $n$ copies of a mixed state~$\sigma$, apply the random purification channel.
    \item Apply pure state tomography on the output of the random purification channel.
    \item Return the reduced state of the estimated pure state.
\end{enumerate}
To learn the mixed state, it clearly suffices to learn a purification of it, and it doesn't matter which one.
Therefore, the above sketched protocol is correct.
Note that for this application, it is crucial that the random purification is consistent over multiple copies.
The above protocol, and more sophisticated versions of it, has been analyzed in detail in~\cite{pelecanos2025mixed}.
Remarkably,while the analysis is \emph{easier} than for previous protocols, the resulting protocol is \emph{optimal} in terms of sample complexity.
Additionally, the random purification channel can be implemented efficiently, and this yielded the first tomography scheme that is both optimal in sample complexity and computationally efficient.
Another application is to channel tomography, where one can apply random purification to the Choi state of a quantum channel, with an improved sample complexity compared to previous channel tomography results~\cite{mele2025optimal,chen2025quantum}.

The central insight behind the random purification channel is drawn from the church of the symmetric subspace~\cite{harrow2013church}, which teaches that in quantum information tasks involving independent copies of a state, both the design and analysis of optimal protocols are often governed by symmetries and representation theory.
To see the connection, we observe that states of the form~$\ket{\psi}^{\ot n}$ span the symmetric subspace in $\cH^{\ot n}$.
More generally, on~$\cH^{\ot n}$ there are two group actions. One is by $\U(d)$, with unitaries acting identically on each copy (i.e., by $U^{\ot n}$). The other action is by the permutation group~$S_n$, which permutes the copies. These two actions are each others commutants, and this gives the Schur-Weyl decomposition, which decomposes the actions of $\U(d)$ and $S_n$ to a block-diagonal form:
\begin{align}\label{eq:sw intro}
    \cH^{\ot n} \cong \bigoplus_{\lambda} \mathcal V_\lambda \ot \mathcal W_\lambda,
\end{align}
where $\mathcal V_\lambda$ and $\mathcal W_\lambda$ are irreducible representations of $\U(d)$ and $S_n$ respectively, labeled by Young diagrams~$\lambda$. The high-level idea of the random purification channel is that $\rho^{\ot n}$ is in the algebra generated by the action of~$\U(d)$, hence
\begin{align}\label{eq:rho n times}
    \rho^{\ot n} \cong \bigoplus_{\lambda} \rho_{\lambda} \ot \frac{\id_{W_{\lambda}}}{\dim \mathcal W_{\lambda}},
\end{align}
so the state is maximally mixed on the irreducible representations of $S_n$.
The random purification channel now takes a reference system with the same decomposition.
For each $\lambda$, it prepares the maximally entangled state on each pair of $\cW_{\lambda}$ (which indeed purifies the maximally mixed state in \cref{eq:rho n times}), and it prepares the maximally mixed state on the reference copies of the $\mathcal V_\lambda$.
Using the fact that averaging over random unitaries~$U^{\ot n}$ corresponds to a depolarizing channel on the~$\mathcal V_\lambda$, one can then show that this procedure coincides with \cref{eq:random purification intro}.
In fact, one does not need the details of Schur-Weyl duality to derive this result, a simple alternative derivation is given in \cite{girardi2025random}, and below we explain how to obtain this result without any use of Schur-Weyl duality.
The advantage of using the explicit decomposition in \cref{eq:sw intro} is that it is efficiently implemented by the quantum Schur transform---this gives a computationally efficient implementation of the random purification channel.

In this work, we introduce a natural generalization of the random purification channel for arbitrary symmetry groups or algebras.
While our result is more general, it also gives a simple and transparent proof of the standard random purification channel.
As an application we consider fermionic Gaussian states on $m$ modes, which are generated by unitaries of the form~$U_R^{\ot n}$, where the $U_R$ are a projective representation of $R \in \SO(2m)$.
The generalized random purification channel can then easily be applied to reduce the case of mixed fermionic Gaussian states to that of pure fermionic Gaussian states, which allows us to obtain new protocols for tomography and property testing.
We now describe our contributions in more detail.

%-----------------------------------------------------------------------------
\subsection{Random purification channel for general symmetries}
%-----------------------------------------------------------------------------
Let $\cH$, $\cH'$ be Hilbert spaces of the same dimension, with bases~$\ket x$,~$\ket{x'}$.
For an operator~$A$ on the first Hilbert space~$\cH$, denote by~$A\tran$ its transpose on the other Hilbert space~$\cH'$, where we identify the operators through the choices of basis.
Also denote by~$\ket\Gamma = \sum_x \ket{xx'} \in \cH \ot \cH'$ the corresponding unnormalized maximally entangled state.
For any quantum state $\rho \in \S(\cH)$, we have a purification $\ket{\psi^\text{std}_\rho} \coloneqq (\sqrt\rho \ot I) \ket\Gamma$, called the \emph{standard purification}.%
\footnote{To obtain a truly canonical purification one can choose $\cH'$ to be the dual (or the conjugate) Hilbert space of~$\cH$.}

\begin{thm}[Random purification for general symmetries, simplified]\label{thm:main simplified}
For any closed subgroup~$G \subseteq \U(\cH)$, there is a channel $\cP_G \colon \Lin(\cH) \to \Lin(\cH \ot \cH')$ such that the following holds:
For any state~$\rho \in \S(\cH)$ such that~$\rho \in \C G$ (the span of~$G$), we have
\begin{align}\label{eq:twirl intro}
    \cP_G[\rho] = \int_G (I \ot g\tran) \psi^\text{std}_\rho (I \ot \bar g) \, \dd g.
\end{align}
If the Fourier transform for the action of~$G$ can be implemented efficiently, then so can the channel.
\end{thm}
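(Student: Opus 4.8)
The plan is to build the channel $\cP_G$ out of two ingredients governed entirely by representation theory. Consider the action of $G$ on $\cH$; decompose $\cH \cong \bigoplus_\lambda \cV_\lambda \ot \cW_\lambda$, where $\cV_\lambda$ ranges over the inequivalent irreps of $G$ appearing in $\cH$ and $\cW_\lambda$ is the corresponding multiplicity space. The key structural fact is that $\C G$, the algebra generated by $G$, is exactly $\bigoplus_\lambda \Lin(\cV_\lambda) \ot I_{\cW_\lambda}$, so any state $\rho \in \C G$ has the block form $\rho \cong \bigoplus_\lambda \rho_\lambda \ot (I_{\cW_\lambda}/\dim\cW_\lambda)$ up to relative weights; in particular $\rho$ is maximally mixed on each multiplicity space. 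I would take the reference system $\cH'$ to carry the same decomposition (via the transpose/conjugate action, matching the $\ket\Gamma$ convention in the excerpt), and define $\cP_G$ to be: apply the Fourier transform (Schur-type isomorphism) to move into the block-diagonal picture, then on each block $\cV_\lambda \ot \cW_\lambda$ together with its reference copy, (i) replace the $\cV_\lambda$ factor by the maximally mixed state $I_{\cV_\lambda}/\dim\cV_\lambda$ on both system and reference, and (ii) prepare the (normalized) maximally entangled state between the system's $\cW_\lambda$ and the reference's $\cW_\lambda$, weighted by the original block weight; then apply the inverse Fourier transform.

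The main verification is that this $\cP_G$ reproduces \cref{eq:twirl intro} on states $\rho \in \C G$. First I would compute the right-hand side. Writing $\ket{\psi^\text{std}_\rho} = (\sqrt\rho \ot I)\ket\Gamma$ and using that $\ket\Gamma$ satisfies $(A \ot I)\ket\Gamma = (I \ot A\tran)\ket\Gamma$, the twirl $\int_G (I \ot g\tran)\psi^\text{std}_\rho(I \ot \bar g)\,\dd g$ is a mixture of purifications all of which reduce to $\rho$ on $\cH$; its block structure is dictated by Schur orthogonality. Concretely, on each block the twirl over $G$ averages the $g$-action on $\cW_\lambda$'s reference copy, and because $\rho$ is already maximally mixed on $\cW_\lambda$, standard Schur averaging over the irrep $\cV_\lambda$ (acting on the system) turns the $\cV_\lambda$-part into the maximally mixed state tensored with a maximally entangled state on the multiplicity spaces — precisely the output of $\cP_G$. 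This is the analogue of the computation sketched in the introduction for the Schur--Weyl case, and I expect it to reduce to (a) the fact that $\rho \in \C G$ forces maximal mixedness on multiplicity spaces, and (b) a single application of Schur orthogonality / the formula that twirling $U^{\otimes}$-type actions gives a projection onto the commutant, here the projection onto $\bigoplus_\lambda (I_{\cV_\lambda}/\dim\cV_\lambda) \ot \Lin(\cW_\lambda)$.

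For completeness one also checks that $\cP_G$ is a genuine quantum channel (completely positive and trace preserving) on \emph{all} of $\Lin(\cH)$, not just on $\C G$: the move into the block-diagonal picture followed by the replace-with-maximally-mixed and prepare-maximally-entangled steps is manifestly CPTP on each block (it is a composition of a pinching, a partial trace, and state preparation), and conjugating by the unitary Fourier transform preserves this. The efficiency claim is then immediate: the only nontrivial step is the Fourier transform and its inverse, and the remaining per-block operations (partial trace over $\cV_\lambda$, preparing maximally mixed and maximally entangled states) are easy to implement given access to the block labels $\lambda$ and dimensions.

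The step I expect to be the main obstacle is item (ii) above — correctly matching the reference-system convention. The twirl in \cref{eq:twirl intro} acts on $\cH'$ by $g\tran$ (equivalently $\bar g$), so the reference carries the \emph{transpose/conjugate} representation, and one must be careful that the maximally entangled state $\ket\Gamma$ intertwines $g$ on $\cH$ with $g\tran$ on $\cH'$ in exactly the way that makes the per-block maximally entangled states on $\cW_\lambda \ot \cW_\lambda'$ invariant under the combined twirl. Getting the bookkeeping of conjugate irreps, multiplicity spaces, and the $\sqrt\rho$ in the standard purification to line up is where the real work lies; once that is pinned down, the identity \cref{eq:twirl intro} follows from Schur orthogonality with no further input.
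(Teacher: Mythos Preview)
Your overall strategy matches the paper's, but step~(i) of your channel definition contains a concrete error that makes the construction fail. You write that the channel should ``replace the $\cV_\lambda$ factor by the maximally mixed state $I_{\cV_\lambda}/\dim\cV_\lambda$ on both system and reference.'' But the system's $\cV_\lambda$ is exactly where the nontrivial content of $\rho$ lives: since $\rho \cong \bigoplus_\lambda \rho_\lambda \ot (I_{\cW_\lambda}/\dim\cW_\lambda)$, depolarizing the system's $\cV_\lambda$ discards the $\rho_\lambda$. The reduced state on $\cH$ of your channel's output is then the isotypic-depolarized state $\bigoplus_\lambda \tr[P_\lambda\rho]\, (I_{\cV_\lambda}/\dim\cV_\lambda) \ot (I_{\cW_\lambda}/\dim\cW_\lambda)$, whereas the right-hand side of \cref{eq:twirl intro}---being a mixture of purifications of $\rho$---reduces to $\rho$ itself on $\cH$. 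These disagree unless every $\rho_\lambda$ is already maximally mixed.

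The fix (and what the paper does) is to leave the system's $\cV_\lambda$ untouched: measure $\lambda$, discard the system's $\cW_\lambda$, prepare the maximally entangled state on $\cW_\lambda \ot \cW'_\lambda$, and prepare the maximally mixed state only on the \emph{reference's}~$\cV'_\lambda$. Your verification paragraph inherits the same reversal: the twirl in \cref{eq:twirl intro} acts as $I \ot g\tran$, so it touches only the reference, and since $g$ acts nontrivially on the irrep factors (and trivially on the multiplicity factors), Schur averaging depolarizes the reference's $\cV'_\lambda$---not $\cW'_\lambda$, and not the system at all. The maximally entangled structure on $\cW_\lambda \ot \cW'_\lambda$ is already present in the standard purification (purifying the maximally mixed $I_{\cW_\lambda}/\dim\cW_\lambda$) and survives the twirl untouched; it is not produced by the averaging.
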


\noindent
\Cref{thm:main simplified} is a simplified version of \cref{thm:main technical}, which gives a precise description of the channel and is naturally phrased in the language of *-algebras.%
\footnote{Note that $\C G = G''$ is automatically closed under products and adjoints, hence a *-algebra.}
% Notably, the channel $\cP_G$ maps arbitrary states into the symmetric subspace for the commutant algebra~$G'$.
% It can be interpreted as the Petz recovery map for the partial trace and the maximally mixed state on this symmetric subspace.

It is easy to see that the the random purification channel of~\cite{tang2025conjugate,pelecanos2025mixed,girardi2025random} is a special case of the above theorem.
Let $\cH = \cH' = (\C^d)^{\ot n}$, with the computational basis. %, and $G = \{ U^{\ot n} : U \in \U(d) \}$.
If~$\rho = \sigma^{\ot n}$ is an IID state, then~$\rho \in \C \{ U^{\ot n} : U \in \U(d) \}$.%
\footnote{\label{footnote:rho}This follows from Schur-Weyl duality, but also from the following easy argument (see, e.g., \cite{symqi}). As a vector space, $\C \{ U^{\ot n} : U \in \U(d) \}$ contains arbitrary limits, hence in particular the Lie algebra action, therefore also its complexification, and thus also $g^{\ot n}$ for $g \in \overline{\GL(d)} = \C^{d \times d}$.}
Because the unitary group is closed under transposes, the standard purification of an $n$-th tensor power is the $n$-th tensor power of the standard purification, and purifications are unique up to unitaries, we obtain the following as a direct consequence of \cref{thm:main simplified}:

\begin{cor}[\cite{tang2025conjugate,pelecanos2025mixed,girardi2025random}]\label{cor:og channel}
There is a channel~$\cP^\text{Purify}_{n,d}$ from~$(\C^d)^{\ot n}$ to the symmetric subspace of~$(\C^d \ot \C^d)^{\ot n}$ such~that
\begin{align*}
    \cP^\text{Purify}_{n,d}[\sigma^{\ot n}]
= \int_{\U(d)} (I \ot U^{\ot n}) \psi^{\text{std}}_{\sigma^{\ot n}} (I \ot U^{\dagger,\ot n}) \ \dd U
= \Ex_{\psi_\sigma \text{ purification of } \sigma } \psi_\sigma^{\ot n}
\end{align*}
The first equation holds not just IID states~$\sigma^{\ot n}$, but for arbitrary permutation-invariant states~$\rho$.
\end{cor}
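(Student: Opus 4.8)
The plan is to apply \cref{thm:main simplified} with $\cH = \cH' = (\C^d)^{\ot n}$ in the computational basis and $G = \{U^{\ot n} : U \in \U(d)\}$, a closed subgroup of $\U(\cH)$. The first step is to identify the algebra $\C G$: by Schur--Weyl duality $G' = \C[S_n]$ (the image of the permutation group algebra), so by the double commutant theorem $\C G = G'' = \C[S_n]'$ is exactly the algebra of permutation-invariant operators on $(\C^d)^{\ot n}$. In particular every permutation-invariant state $\rho$ lies in $\C G$ (for IID states $\sigma^{\ot n}$ one can instead use the elementary argument of the footnote), so \cref{thm:main simplified} applies to all such $\rho$; this is where the ``moreover'' clause will come from.

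Next I would unwind the twirl formula of \cref{thm:main simplified}, which gives a channel $\cP_G$ with $\cP_G[\rho] = \int_G (I \ot g\tran)\,\psi^\text{std}_\rho\,(I \ot \bar g)\,\dd g$ for $\rho \in \C G$. Because $g \mapsto g\tran$ is the composition of inversion with entrywise conjugation, it is a Haar-measure-preserving homeomorphism of $G$; substituting $g \mapsto g\tran$ and using $(g\tran)\tran = g$ and $\overline{g\tran} = g^\dagger$ turns the formula into $\int_G (I \ot g)\,\psi^\text{std}_\rho\,(I \ot g^\dagger)\,\dd g = \int_{\U(d)} (I \ot U^{\ot n})\,\psi^\text{std}_\rho\,(I \ot U^{\dagger,\ot n})\,\dd U$, which is the first displayed identity, valid for every permutation-invariant $\rho$. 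Specializing to $\rho = \sigma^{\ot n}$, I would then use the regrouping isomorphism $(\C^d)^{\ot n} \ot (\C^d)^{\ot n} \cong (\C^d \ot \C^d)^{\ot n}$ that pairs the $k$-th system factor with the $k$-th reference factor: under it $\ket{\Gamma}$ becomes the $n$-fold tensor power of the $\C^d \ot \C^d$ maximally entangled vector, $\sqrt{\sigma^{\ot n}} = (\sqrt\sigma)^{\ot n}$ gives $\psi^\text{std}_{\sigma^{\ot n}} = (\psi^\text{std}_\sigma)^{\ot n}$, and $I \ot U^{\ot n}$ becomes $(I \ot U)^{\ot n}$; hence the integrand factorizes as $\big((I \ot U)\,\psi^\text{std}_\sigma\,(I \ot U^\dagger)\big)^{\ot n}$, which is the first equality in the corollary.

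For the second equality I would use that every purification of $\sigma$ on $\C^d \ot \C^d$ is $(I \ot V)\ket{\psi^\text{std}_\sigma}$ with $V \in \U(d)$, and that the random purification in \cref{eq:random purification intro} is the one with $V$ Haar-random, so $\Ex_{\psi_\sigma}\psi_\sigma^{\ot n} = \int_{\U(d)} \big((I \ot V)\,\psi^\text{std}_\sigma\,(I \ot V^\dagger)\big)^{\ot n}\,\dd V$, which matches the expression just obtained. It remains to arrange that the channel maps into the symmetric subspace $\Sym^n(\C^d \ot \C^d)$: for permutation-invariant $\rho$, each $(I \ot U^{\ot n})\ket{\psi^\text{std}_\rho} = (\sqrt\rho \ot I)(I \ot U^{\ot n})\ket{\Gamma}$ lies in $\Sym^n(\C^d \ot \C^d)$, since after regrouping $(I \ot U^{\ot n})\ket{\Gamma}$ is a product vector in $(\C^d \ot \C^d)^{\ot n}$ and $\sqrt\rho \ot I$ commutes with the symmetrizer (as $\sqrt\rho$ is permutation-invariant), so $\cP_G[\rho]$ is supported on $\Sym^n(\C^d \ot \C^d)$. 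Composing $\cP_G$ with the channel that projects onto $\Sym^n(\C^d \ot \C^d)$ and outputs the maximally mixed state there upon failure then yields a channel $\cP^\text{Purify}_{n,d}$ with the required codomain that agrees with $\cP_G$ on every permutation-invariant input, so none of the identities are affected (alternatively one can quote the technical version \cref{thm:main technical}, whose channel already has this codomain).

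I expect the only delicate points to be the bookkeeping of the transpose/conjugation conventions and of the regrouping isomorphism in the middle step, together with the observation---not literally part of the simplified statement---that the output can be confined to the symmetric subspace; everything else is a direct substitution into \cref{thm:main simplified}.
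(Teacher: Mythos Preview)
Your proposal is correct and follows essentially the same route as the paper: apply \cref{thm:main simplified} with $G=\{U^{\ot n}:U\in\U(d)\}$, use Schur--Weyl duality (or the footnote argument) to see that permutation-invariant states lie in $\C G$, use closure of $\U(d)$ under transposition to rewrite the twirl, and use that the standard purification of a tensor power is the tensor power of the standard purification together with uniqueness of purifications. Your treatment is in fact more explicit than the paper's, which leaves the transpose/regrouping bookkeeping and the verification that the output lands in $\Sym^n(\C^d\ot\C^d)$ implicit; your argument for the latter (via $R_\pi^{\text{pairs}}=R_\pi\ot R_\pi$ and permutation invariance of $\sqrt\rho$) is a clean way to justify the stated codomain, and your observation that one may alternatively invoke item~2 of \cref{thm:main technical} is also correct, since the $\cA$-symmetric subspace for $\cA=\C[S_n]$ is contained in the symmetric subspace.
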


This is due to Schur-Weyl duality, which asserts that $\C \{ U^{\ot n} : U \in \U(d) \}$ is the commutant of the permutation operators.
Moreover, the channel can be implemented efficiently by using the quantum Schur transform~\cite{bacon2006efficient,burchardt2025high}.

%-----------------------------------------------------------------------------
\subsection{Application: Sample-optimal tomography and testing of fermionic Gaussian states}
%-----------------------------------------------------------------------------
One application of our random purification theorem is in the setting of fermionic Gaussian states.
These are states that live on exponentially large Hilbert space, the fermionic Fock space~$\bigwedge \C^m \cong (\C^2)^{\ot m}$, yet can be described succinctly (e.g., in terms of a $2m \times 2m$ covariance matrix).
There is an associated group of Gaussian unitaries, which form a projective representation of~$\SO(2m)$.
We refer to \cref{sec:fermions} for more details.
Similarly as in the case of qudits, for any fermionic Gaussian state~$\sigma$, $\sigma^{\ot n}$ is contained in the span of the $n$-th tensor powers of Gaussian unitaries.
This allows us to apply our results.
By specializing \cref{thm:main simplified} to this setting, we obtain a quantum channel that sends $n$ copies of a fermionic Gaussian state to a random fermionic Gaussian purification:%
\footnote{Just like \cref{cor:og channel}, which applies to arbitrary permutation-invariant states, \cref{cor:fermions} generalizes to states invariant under an action of the orthogonal group~$\O(n)$, as a consequence of Howe duality~\cite{howe1995perspectives,girardi2025gaussian}.}

\begin{cor}[Fermionic Gaussian random purification]\label{cor:fermions}
For any $m,n$, there is a channel $\cP_{n,m}^\text{Fermi}$ from $(\bigwedge \C^m)^{\ot n}$ to $(\bigwedge \C^{2m})^{\ot n}$ such that the following holds:
for any fermionic Gaussian state~$\sigma$ on $\bigwedge \C^m$,
\begin{align*}
    \cP_{n,m}^\text{Fermi}[\sigma^{\ot n}]
= \Ex_{\psi_\sigma \text{ Gaussian purification of } \sigma } \psi_\sigma^{\ot n}
= \int_{\SO(2m)} (I \ot U_R^{\ot n}) \psi^{\text{std},\ot n}_\sigma (I \ot U_R^{\dagger,\ot n}) \ \dd R,
\end{align*}
where the expectation is over Gaussian purifications obtained by applying a uniformly random Gaussian unitary~$U_R$ to the standard purification, as in the right-hand side formula.
\end{cor}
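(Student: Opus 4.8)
The plan is to obtain \cref{cor:fermions} from \cref{thm:main simplified} by taking $G$ to be the group generated by the $n$-fold tensor powers of fermionic Gaussian unitaries, and then rewriting the twirl \eqref{eq:twirl intro} in terms of covariance matrices. Fix the occupation-number basis of $\cH_0 \coloneqq \bigwedge\C^m$ and recall from \cref{sec:fermions} that the Gaussian unitaries $U_R$, $R\in\SO(2m)$, form a projective representation that lifts to a genuine unitary representation $\pi$ of the compact group $\Spin(2m)$. Put $\cH=\cH'=\cH_0^{\ot n}$, and let $G\subseteq\U(\cH)$ be the image of $\Spin(2m)$ under $h\mapsto\pi(h)^{\ot n}$. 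This is a closed subgroup (a continuous image of a compact group), so \cref{thm:main simplified} applies; moreover the integrand of a twirl over $G$ depends on $\pi(h)^{\ot n}$ only up to an overall sign, so twirling over $G$ with its Haar measure is the same as averaging over $R\in\SO(2m)$.

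The first substantive step is the fermionic analogue of \cref{footnote:rho}: for every fermionic Gaussian state $\sigma$ we must check $\sigma^{\ot n}\in\C G$. Since $\L(\cH)$ is finite-dimensional, $\C G = G''$ is a closed $*$-subalgebra. Differentiating $t\mapsto\pi(\exp(tX))^{\ot n}$ at $t=0$, for $X$ in the Lie algebra $\so(2m)$, shows $\C G$ contains the diagonal action $\sum_k I^{\ot(k-1)}\ot\xi\ot I^{\ot(n-k)}$ for every $\xi$ in the image of $\so(2m)$, hence---$\C G$ being a complex subspace---for every $\xi$ in its complexification, i.e.\ for every quadratic Hamiltonian; being in addition closed under products and limits, $\C G$ then contains $g^{\ot n}$ for every $g$ in the closure of the semigroup generated by the $e^{-H}$ with $H$ quadratic. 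Writing the Gaussian state as $\sigma=e^{-H}/Z$ (allowing a singular $H$, absorbed into the closure) gives $\sigma^{\ot n}\in\C G$. Now \cref{thm:main simplified} yields a channel $\cP_G\colon\L(\cH)\to\L(\cH\ot\cH')$ with
\[
  \cP_G[\sigma^{\ot n}] = \int_{\SO(2m)} \bigl(I\ot(U_R\tran)^{\ot n}\bigr)\,\psi^\text{std}_{\sigma^{\ot n}}\,\bigl(I\ot\overline{U_R}^{\ot n}\bigr)\,\dd R.
\]
Here $\psi^\text{std}_{\sigma^{\ot n}}=(\psi^\text{std}_\sigma)^{\ot n}$ since $\sqrt{\sigma^{\ot n}}=(\sqrt\sigma)^{\ot n}$ and $\ket\Gamma$ on $\cH_0^{\ot n}$ is the product of the $\ket\Gamma$'s on the factors, while $U_R\tran=\overline{U_R^{-1}}=(\overline{U_R})^\dagger$ because $U_R$ is unitary, so the integrand is just conjugation of $(\psi^\text{std}_\sigma)^{\ot n}$ by $I\ot\overline{U_R}^{\ot n}$. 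Complex conjugation in the occupation basis conjugates the Majorana operators by a fixed real orthogonal matrix and hence realizes a continuous automorphism $\theta$ of $\SO(2m)$ with $\overline{U_R}=U_{\theta(R)}$ up to a phase that cancels; substituting $S=\theta(R)$ and using invariance of Haar measure turns this into
\[
  \cP_G[\sigma^{\ot n}] = \int_{\SO(2m)} \bigl(I\ot U_S^{\ot n}\bigr)\,(\psi^\text{std}_\sigma)^{\ot n}\,\bigl(I\ot U_S^{\dagger,\ot n}\bigr)\,\dd S.
\]

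It remains to unwind the fermionic identifications. A reordering of tensor factors identifies $\cH\ot\cH'=\cH_0^{\ot n}\ot\cH_0^{\ot n}\cong(\cH_0\ot\cH_0)^{\ot n}\cong(\bigwedge\C^{2m})^{\ot n}$, and we take $\cP^\text{Fermi}_{n,m}$ to be $\cP_G$ conjugated by this (basis) permutation. Under the identification $\cH_0\ot\cH_0\cong\bigwedge\C^{2m}$, the operator $\sqrt\sigma$ is a positive Gaussian operator, $\ket\Gamma$ is a pure Gaussian state (a product of two-mode BCS states across the paired factors), and $I\ot U_S$ is a Gaussian unitary; hence $\ket{\psi^\text{std}_\sigma}=(\sqrt\sigma\ot I)\ket\Gamma$ is a pure fermionic Gaussian state on $2m$ modes whose reduction to the first $m$ modes is $\sigma$, and $(I\ot U_S)\psi^\text{std}_\sigma(I\ot U_S^\dagger)$ is a Gaussian purification of $\sigma$. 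The second displayed formula is then exactly the right-hand side of \cref{cor:fermions}, and the middle expression is the same average written as an expectation over these Gaussian purifications. I expect the main obstacle to be precisely this last step---establishing $\cH_0\ot\cH_0\cong\bigwedge\C^{2m}$ with $\ket\Gamma$, $\sqrt\sigma$ Gaussian, and that transposes and complex conjugates of Gaussian unitaries are Gaussian---together with the membership $\sigma^{\ot n}\in\C G$; both are routine given the structure theory recalled in \cref{sec:fermions}, but the sign conventions of the fermionic tensor product and the Jordan--Wigner representation must be tracked carefully. (The footnote to \cref{cor:fermions} is handled the same way, replacing the elementary argument for $\C G$ by Howe duality for $\O(n)$-invariant states.)
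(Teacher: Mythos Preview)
Your approach is essentially the paper's: apply \cref{thm:main simplified} (or \cref{thm:main technical}) with $G=\{U_R^{\ot n}\}$, verify $\sigma^{\ot n}\in\C G$ by the Lie-algebra argument (this is exactly \cref{lem:gaussian state in algebra}), and check that the standard purification and the operators $I\ot U_R$, $\overline{U_R}$, $U_R\tran$ are Gaussian on $2m$ modes (this is \cref{lem:gaussian purification}). Your handling of the transpose via the automorphism $R\mapsto DRD^{-1}$ of $\SO(2m)$ and Haar invariance is a clean alternative to the paper's direct computation.

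However, there is one concrete error. You assert that in the occupation-number basis $\ket\Gamma=\sum_x\ket{xx}$ is a pure Gaussian state on $2m$ modes, being ``a product of two-mode BCS states across the paired factors.'' This is false for $m\ge 2$: with the paper's Jordan--Wigner identification $\cH_m\ot\cH_m\cong\cH_{2m}$, modes $j$ and $m+j$ are not adjacent, and the JW strings spoil the naive factorization. Concretely, for $m=2$ and $\ket\psi\propto\ket{0000}+\ket{0101}+\ket{1010}+\ket{1111}$ one finds $\braket{c_1c_3c_5c_7}=1$ while all the two-point functions $\braket{c_1c_3},\braket{c_1c_5},\braket{c_1c_7}$ vanish, so Wick's theorem fails. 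This is precisely the ``small subtlety'' the paper addresses by choosing the twisted basis $\ket{x'}=(-1)^{\gamma(x)}\ket x$ on $\cH'_m$ and then proving from scratch, via an explicit inductive construction of Gaussian unitaries (\cref{lem:gaussian diagonal purification}), that the resulting $\ket{\psi^\text{std}_\rho}$ is Gaussian. You correctly flag the sign conventions as the likely obstacle, but you should not call this step routine: it is where all the actual work in the fermionic case lives, and your stated claim about $\ket\Gamma$ does not survive it.
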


We use \cref{cor:fermions} to construct a tomography protocol for fermionic Gaussian states.
Tomography of fermionic Gaussian states has been studied in previous work \cite{aaronson2021efficient,ogorman2022fermionic,bittel2025optimalfermion}. The previous best sample complexities were $\widetilde \bigO(m^2)$ for pure states\footnote{Since \cite{zhao2024learning} shows a sample complexity linear in $L$ for states prepared by $L$ two-qubit gates and all Gaussian fermionic states can be prepared using $\bigO(m^2)$ two-qubit gates.} \cite{zhao2024learning} and $\bigO(m^4)$ for mixed states \cite{bittel2025optimalfermion}.
Our work improves this to~$\bigO(m^2)$ for both pure and mixed states:

\begin{thm}[Fermionic Gaussian tomography, upper bound]\label{thm:tomo}
There exists a tomography protocol for $m$-mode fermionic Gaussian states~$\sigma$ that outputs an estimate~$\hat\sigma$ with fidelty at least~$F(\sigma,\hat \sigma)^2 \geq 1 - \eps$ with constant probability, using $\bigO(m^2/\eps)$ copies of~$\sigma$.
\end{thm}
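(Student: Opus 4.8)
The plan is to run the purify-then-learn recipe from the introduction, but with the fermionic Gaussian random purification channel of \cref{cor:fermions} in place of the qudit one. Given $n$ copies of an $m$-mode fermionic Gaussian state~$\sigma$, first apply $\cP_{n,m}^{\text{Fermi}}$; by \cref{cor:fermions} the output on $(\bigwedge\C^{2m})^{\ot n}$ equals $\Ex_{\psi_\sigma}[\psi_\sigma^{\ot n}]$, an average over pure Gaussian purifications~$\psi_\sigma$ of~$\sigma$ on $2m$ modes. Conditioning on the random purification, this is just $n$ copies of one unknown pure $2m$-mode Gaussian state, so I would next run a pure-state fermionic Gaussian tomography protocol on it to obtain an estimate~$\hat\psi$, and finally output $\hat\sigma\coloneqq\tr_{\cH'}[\hat\psi]$. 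Correctness follows from the church of the larger Hilbert space: $\sigma=\tr_{\cH'}[\psi_\sigma]$ exactly, so monotonicity of fidelity under the partial trace gives $F(\sigma,\hat\sigma)\ge F(\psi_\sigma,\hat\psi)$, and $\hat\sigma$ is automatically a valid $m$-mode Gaussian state because reductions of Gaussian states are Gaussian. It therefore suffices to establish one quantitative ingredient: a protocol that learns an $M$-mode pure fermionic Gaussian state to $F^2\ge 1-\eps$ with constant probability from $\bigO(M^2/\eps)$ copies; applying it with $M=2m$ and $n=\bigO(m^2/\eps)$ finishes the proof.

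For the pure-state ingredient I would estimate the covariance matrix~$C$ of the state---a real antisymmetric $2M\times 2M$ matrix that, crucially, also satisfies $C^2=-\id$ precisely because the state is pure. Its entries $C_{\mu\nu}=-i\langle\gamma_\mu\gamma_\nu\rangle$ are expectation values of quadratic Majorana operators and can be estimated jointly from suitable randomized fermionic (matchgate) measurements; I would form an estimate~$\hat C$ from $N$ such snapshots, project it onto the set of pure-state covariance matrices (which at most doubles $\norm{\hat C-C}_F$, since the true~$C$ lies in that closed set), and take~$\hat\psi$ to be the corresponding pure Gaussian state. The error is then controlled by the fact that for pure $M$-mode Gaussian states $1-F^2=\bigO(\norm{C_1-C_2}_F^2)$, together with a second-moment bound $\Ex\norm{\hat C-C}_F^2=\bigO(M^2/N)$ and Markov's inequality---this gives constant success probability with no logarithmic overhead, hence the clean $\bigO(M^2/\eps)$. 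The pure-state hypothesis is what makes this work: the best mixed-state bound is $\bigO(m^4)$ \cite{bittel2025optimalfermion}, and it is the constraint $C^2=-\id$, tying the $\bigO(M^2)$ entries together and cutting down the number of well-resolved directions in the estimator, that brings the pure case down to~$\bigO(M^2)$; since the prior pure-state protocol of \cite{zhao2024learning} already reaches $\widetilde\bigO(M^2)$, one could alternatively invoke and sharpen that.

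The hard part is exactly this last step---pinning down the pure-state sample complexity with the optimal dependence on both~$M$ and~$\eps$. Three points need care. (i) The measurement analysis: one must control the relevant second-moment/shadow-norm quantity for the \emph{projected} covariance estimator, not for the $\bigO(M^2)$ entries one at a time, so that the pure-state structure genuinely saves at least a factor of~$M$ over naive covariance tomography. (ii) The fidelity bound $1-F^2=\bigO(\norm{\hat C-C}_F^2)$ must be proved globally with an absolute constant and must survive the projection step. (iii) One must check that feeding the mixture $\Ex_{\psi_\sigma}[\psi_\sigma^{\ot n}]$ into a pure-state protocol is legitimate: conditioned on~$\psi_\sigma$ the post-channel state is exactly $\psi_\sigma^{\ot n}$ and the estimator acts copy by copy, so the conditional pure-state guarantee applies and then averages over~$\psi_\sigma$---but this uses essentially that the random purification is \emph{consistent} across all $n$ copies, which is exactly what \cref{cor:fermions} provides. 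The remaining ingredients---\cref{cor:fermions}, fidelity monotonicity, and closure of Gaussian states under reduction---are standard.
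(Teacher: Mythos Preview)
Your reduction---apply $\cP_{n,m}^{\text{Fermi}}$, learn the resulting pure $2m$-mode Gaussian state, trace out the purifying system---is exactly the paper's proof of \cref{thm:tomo}. The divergence is entirely in the pure-state ingredient, and there the paper takes a different route that actually closes the gap you correctly flag as ``the hard part.''

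The paper does \emph{not} estimate the covariance matrix. It uses a joint measurement on all $n$ copies: the Hayashi-style continuous POVM $\dd\mu_{n,M}(\phi) = d_{n,M}\,\dd\phi\,\proj{\phi}^{\ot n}$ over pure Gaussian states~$\phi$ (\cref{eq:haar measure projection}). The analysis is pure representation theory: the span of the vectors $\ket{\phi}^{\ot n}$ for $\phi\in\cG_M^{\pm}$ is an irreducible $\so(2M)$-module of highest weight $n\alpha$ (or $n\beta$), the Weyl dimension formula gives $d_{n,M}$ explicitly (\cref{lem:dimension multicopy fermions}), and a one-line computation yields $\Ex\abs{\braket{\psi|\hat\psi}}^2 = d_{n,M}/d_{n+1,M} \ge (1-1/n)^{M(M-1)/2}$ (\cref{lem:overlap bound free fermions}). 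Markov then gives $n = \bigO(M^2/\eps)$ directly---no shadows, no covariance matrix, no projection step.

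Your covariance-estimation route, by contrast, does not reach $\bigO(M^2/\eps)$ as sketched. The asserted second-moment bound $\Ex\norm{\hat C - C}_F^2 = \bigO(M^2/N)$ is the crux, and the known matchgate-shadow variance for a single quadratic Majorana observable scales as $\Theta(M)$ per sample, giving $\Ex\norm{\hat C - C}_F^2 = \Theta(M^3/N)$---a factor of $M$ too large. Your hope that the pure-state constraint ``genuinely saves at least a factor of $M$'' via projection does not hold up: the manifold $\{C : C\tran=-C,\ C^2=-\id\}\cong \SO(2M)/\U(M)$ has dimension $M(M-1)=\Theta(M^2)$, so projecting onto its tangent space discards only about half of the $\binom{2M}{2}$ noise directions, not a $1/M$ fraction. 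With single-copy matchgate measurements your argument therefore yields $\bigO(M^3/\eps)$; the missing factor of $M$ is precisely what the paper's entangled POVM supplies.
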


Our protocol first applies the random purification channel~$\cP_{n,d}^\text{Fermi}$, and then performs a natural tomography protocol for pure \emph{Gaussian} states analogous to the well-known uniform POVM for pure-state tomography~\cite{hayashi1998asymptotic}, following the template of \cite{pelecanos2025mixed}.
We also show that this sample complexity is optimal, and we show this is already the case for pure states.

\begin{thm}[Fermionic Gaussian tomography, lower bound]\label{thm:tomo lower bound}
Suppose there exists a tomography protocol for $m$-mode pure fermionic Gaussian states~$\psi$ that outputs an estimate~$\hat\psi$ with fidelty at least~$F(\psi,\hat \psi)^2 \geq 1 - \eps$ with probability at least $0.99$ for any pure Gaussian state. Then this protocol needs to use at least $\Omega(m^2/\eps)$ copies of~$\psi$.
\end{thm}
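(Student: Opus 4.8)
The plan is to prove this lower bound by an information-theoretic packing argument localized near the vacuum, where the essential point is that the particle-number sectors of a near-vacuum pure Gaussian state depend holomorphically and polynomially on its defining antisymmetric matrix; this is what keeps the Holevo information of the hard instance at the right value and avoids the spurious $\log(1/\eps)$ that a crude argument would produce.

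Concretely, I would use the Thouless parametrization (see \cref{sec:fermions}): every pure Gaussian state with nonzero vacuum overlap equals $\ket{\psi_M}=\mathcal N_M\exp\bigl(\tfrac12\sum_{k,l}M_{kl}a_k^\dagger a_l^\dagger\bigr)\ket{\mathrm{vac}}$ for a complex antisymmetric $m\times m$ matrix $M$, and for small $M,M'$ one has $1-\abs{\braket{\mathrm{vac}|\psi_M}}^2=\tfrac12\norm{M}_{\mathrm{HS}}^2+O(\norm{M}^4)$ and $1-\abs{\braket{\psi_M|\psi_{M'}}}^2=\tfrac12\norm{M-M'}_{\mathrm{HS}}^2+O\bigl((\norm{M}^2+\norm{M'}^2)^2\bigr)$. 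Fixing $\eta=\Theta(\eps)$ suitably, a standard probabilistic/volume argument on the sphere $\{\norm{M}_{\mathrm{HS}}^2=2\eta\}$ inside the $m(m-1)=\dim\SO(2m)/\U(m)$ real-dimensional space of antisymmetric matrices produces states $\psi_{M_1},\dots,\psi_{M_N}$ with $N=2^{\Omega(m^2)}$ that are pairwise at infidelity $\ge 5\eps$ and individually at infidelity $O(\eps)$ from the vacuum. If a protocol uses $n$ copies and returns $\hat\psi$ with $F(\psi,\hat\psi)^2\ge 1-\eps$ with probability $\ge 0.99$, then on that event the triangle inequality for the Bures angle forces $\hat\psi$ to be $\eps$-close to at most one $\psi_{M_a}$, so rounding to the nearest packing element recovers a uniform index $A$ with error probability $\le 0.01$. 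By Fano's inequality and the Holevo bound, $\Omega(m^2)=\Omega(\log N)\le I(A:\hat\psi)\le S(\bar\rho_n)$ with $\bar\rho_n=\tfrac1N\sum_a\psi_{M_a}^{\ot n}$.

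What remains is to show $S(\bar\rho_n)=O(n\eps\log(m^2/n\eps))$, which is where the Gaussian structure enters. Pinching $\bar\rho_n$ by the eigenprojections of the total particle number operator $\hat N^{(n)}$ can only raise the entropy, so $S(\bar\rho_n)\le S\bigl(\sum_t\Pi_{2t}\bar\rho_n\Pi_{2t}\bigr)$, a block-diagonal state in which only even total particle numbers $2t$ appear since each $\psi_{M_a}$ is a Thouless state. The block weights $q_t=\tr(\Pi_{2t}\bar\rho_n)$ have mean $\tfrac12 n\,\Ex_M\langle\hat N\rangle_{\psi_M}=\Theta(n\eps)$ and exponentially light tails, so $H(\{q_t\})=O(\log(nm))$. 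For the block entropies, expanding the product of exponentials shows that the total-$2t$-particle component of $\ket{\psi_{M_a}^{\ot n}}$ is $\mathcal N_{M_a}^{\,n}$ times a fixed vector-valued homogeneous polynomial of degree $t$ in the $\binom m2$ entries of $M_a$; hence $\Pi_{2t}\bar\rho_n\Pi_{2t}$ has rank at most the number of degree-$t$ monomials in $\binom m2$ variables, $\binom{\binom m2+t-1}{t}$, and $t\mapsto\log\binom{\binom m2+t-1}{t}$ is concave. Jensen then gives $\sum_t q_t\log\binom{\binom m2+t-1}{t}\le\mu\log\!\bigl(2e\binom m2/\mu\bigr)$ with $\mu=\Ex_M[t]=\Theta(n\eps)$, so altogether $S(\bar\rho_n)=O\bigl(n\eps\log(m^2/n\eps)+\log(nm)\bigr)$. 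Comparing with $S(\bar\rho_n)=\Omega(m^2)$: if $n\eps\le c\,m^2$ for a sufficiently small absolute constant $c$ then the right-hand side is $o(m^2)$, a contradiction, hence $n=\Omega(m^2/\eps)$; the regime $n\eps<1$ is dispatched directly since then $S(\bar\rho_n)=O(\log m)$.

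I expect the main obstacle to be precisely this entropy estimate. Pinching instead by the full per-copy occupation pattern introduces an $nH(\eps)$ term and only yields $n=\Omega(m^2/(\eps\log(1/\eps)))$; it is crucial to pinch by the \emph{total} particle number and to exploit the degree-$t$ holomorphic dependence of the $2t$-particle sector on $M$, so that the block ranks stay at $\binom{\binom m2+t-1}{t}$ rather than the exponentially larger full sector dimensions. A secondary point is to justify the Bures expansions uniformly over $\norm{M}_{\mathrm{HS}}^2=O(\eps)$ and to note that projecting $\hat\psi$ onto the nearest near-vacuum Gaussian state preserves the reduction (automatic, as the whole packing lies within Bures radius $O(\sqrt\eps)$ of the vacuum). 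The $m(m-1)$ appearing above is exactly the real dimension of the manifold of pure Gaussian states, matching the upper bound of \cref{thm:tomo}.
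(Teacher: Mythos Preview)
Your approach is correct and genuinely different from the paper's. You run a packing/Fano/Holevo argument localized near the vacuum, with the crucial entropy bound coming from the rank estimate $\rk\bigl(\Pi_{2t}\bar\rho_n\Pi_{2t}\bigr)\le\binom{\binom m2+t-1}{t}$ via the holomorphic degree-$t$ dependence of the Thouless vector on $M$; this is exactly the right object to look at, and your care in pinching by \emph{total} particle number (rather than per-copy occupations) is what removes the spurious $\log(1/\eps)$. The paper instead uses a moment method in the spirit of~\cite{scharnhorst2025optimal}: it shows (\cref{lem:moments fidelity}) that for Haar-random $\psi\in\cG_m^+$ one has $\Ex_\psi\Ex_{\hat\psi}\abs{\braket{\psi|\hat\psi}}^{2k}\le d_{n,m}/d_{n+k,m}$, where $d_{n,m}=\dim\cV_{n,m}^\pm$ is given by the Weyl dimension formula (\cref{lem:dimension multicopy fermions}), and then sets $k=\lfloor 1/(4\eps)\rfloor$ to read off $n\ge k(m^2-m-1)-2m=\Omega(m^2/\eps)$ from an elementary inequality. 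The paper's proof is considerably shorter (under a page, no packing or entropy estimates), reuses the same representation-theoretic quantity $d_{n,m}$ that drives the upper bound, and works uniformly for all $\eps\in(0,1/4]$ with no regime analysis. Your argument has the complementary virtue of being the standard learning-theory template, not relying on explicit knowledge of the irreducible dimensions, and the polynomial-rank observation could port to settings lacking a clean highest-weight picture. One minor caveat: your endgame needs $\log(nm)=o(m^2)$, so as written it does not cover $\eps$ that is exponentially small in $m^2$; the paper's moment argument handles all $\eps$ uniformly.
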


Our improved tomography algorithm also gives an improvement in the sample complexity of testing whether a pure state is a fermionic Gaussian state, or bounded away from the set of these states.

\begin{cor}[Property testing fermionic Gaussian pure states]\label{cor:testing fermions}
    The sample complexity of pure Gaussianity testing on $m$ fermionic modes is at most~$n = \bigO(m^2 / \eps)$.
    That is, there is an algorithm that, using $n$ copies of a pure state~$\psi$ can distinguish with constant probability of success whether $\psi$ is Gaussian or $\abs{\braket{\psi|\phi}}^2 \leq 1 - \eps$ for any Gaussian state $\ket{\psi}$ (assuming one of the two is the case).
\end{cor}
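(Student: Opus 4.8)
The plan is to follow the standard ``test by learning, then verify'' reduction: first run the pure fermionic Gaussian tomography procedure from the proof of \cref{thm:tomo} to produce a candidate pure Gaussian state~$\ket{\hat\psi}$, and then use a few \emph{fresh} copies of~$\psi$ to check whether $\ket{\hat\psi}$ really has large overlap with~$\psi$.

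Concretely, I would first run the pure Gaussian state tomography subroutine used in \cref{thm:tomo} (the Gaussian-state analogue of the uniform-POVM pure-state tomography protocol, applied after the channel~$\cP_{n,m}^\text{Fermi}$ of \cref{cor:fermions} when the input is not already pure) on $n_1 = \bigO(m^2/\eps)$ copies of~$\psi$ with target infidelity~$\eps/3$. The relevant feature is that this protocol can always be taken to output \emph{some} pure Gaussian state~$\ket{\hat\psi}$, regardless of the input: if the underlying covariant POVM does not resolve the identity on the whole Hilbert space, adjoin an extra ``failure'' outcome and, upon seeing it, output an arbitrary pure Gaussian state (this only helps the test). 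If~$\psi$ is Gaussian, then with constant probability---which we may push as close to~$1$ as we wish at the cost of the constant in~$n_1$---the output satisfies $\abs{\braket{\psi|\hat\psi}}^2 \geq 1 - \eps/3$. If instead~$\psi$ is $\eps$-far from Gaussian, then since $\ket{\hat\psi}$ is itself a pure Gaussian state, applying the promise with $\ket\phi = \ket{\hat\psi}$ gives $\abs{\braket{\psi|\hat\psi}}^2 \leq 1 - \eps$, unconditionally.

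Next I would verify the candidate using $n_2 = \bigO(1/\eps)$ fresh copies of~$\psi$. Since $\ket{\hat\psi}$ is known classically, measure each fresh copy with the two-outcome projective measurement $\{\proj{\hat\psi},\; I - \proj{\hat\psi}\}$, and accept iff the second outcome occurs fewer than $\tfrac{\eps}{2}\,n_2$ times. That outcome has probability exactly $1 - \abs{\braket{\psi|\hat\psi}}^2$, hence at most~$\eps/3$ in the Gaussian case (conditioned on a successful first stage) and at least~$\eps$ in the far case. Crucially this probability is \emph{small} in both cases, so a multiplicative Chernoff bound---rather than the naive variance bound, which would only give $\bigO(1/\eps^2)$---shows that $n_2 = \bigO(1/\eps)$ copies already suffice to separate an expected count of $\leq \tfrac{\eps}{3}\,n_2$ from one of $\geq \eps\,n_2$ with constant confidence. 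Composing the constant success probabilities of the two stages, the procedure accepts Gaussian~$\psi$ and rejects $\eps$-far~$\psi$ each with constant probability, using $n_1 + n_2 = \bigO(m^2/\eps)$ copies since $m \geq 1$.

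The step I expect to require the most care is this verification: the tempting route of estimating the fidelity $\abs{\braket{\psi|\hat\psi}}^2$ to additive accuracy~$\Theta(\eps)$ (say via a SWAP test) costs $\bigO(1/\eps^2)$ samples and would dominate the copy count for small~$\eps$; one must instead exploit that the discriminating outcome is rare---the relevant Bernoulli parameter is close to~$1$---which is exactly what brings the verification cost down to $\bigO(1/\eps)$ and keeps the total at $\bigO(m^2/\eps)$. A minor secondary point is the bookkeeping of failure probabilities, since \cref{thm:tomo} only guarantees correctness with constant probability; this is handled by optionally amplifying that constant first and then accepting a constant---rather than overwhelming---overall success probability, which is all the statement requires.
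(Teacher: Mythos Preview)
Your proposal is correct and follows essentially the same ``learn then verify'' strategy as the paper: spend $\bigO(m^2/\eps)$ copies on pure Gaussian tomography (\cref{cor:pure fermion tomography}) to get a Gaussian candidate~$\hat\psi$, then spend $\bigO(1/\eps)$ fresh copies to certify overlap. The only real difference is the verification primitive: the paper prepares $n_2=\Theta(1/\eps)$ copies of~$\hat\psi$, runs $n_2$ swap tests against fresh copies of~$\psi$, and accepts iff \emph{all} swap tests accept (so the analysis is a direct computation of $(1-\Theta(\eps))^{n_2}$ in the two cases), whereas you measure each fresh copy in the basis $\{\proj{\hat\psi}, I-\proj{\hat\psi}\}$ and threshold the rejection count via a multiplicative Chernoff bound. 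Both exploit the same fact you identify---that the discriminating outcome has probability $\Theta(\eps)$ rather than $\Theta(1)$---to get away with $\bigO(1/\eps)$ rather than $\bigO(1/\eps^2)$ copies. Your treatment of the extension of the POVM~$\mu_{n,m}$ to all of~$\cH_m^{\ot n}$ (adjoining a failure outcome that deterministically outputs some Gaussian state) is a valid detail the paper leaves implicit.
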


%-----------------------------------------------------------------------------
\subsection{Random purification of bosonic Gaussian states}
%-----------------------------------------------------------------------------
The random purification theorem also applies to bosonic Gaussian states that are gauge-invariant.
Gaussian states are quantum states on the infinite-dimensional Hilbert space of $m$ bosonic modes, $\cH = \L^2(\R^m)$, that are succinctly described, e.g.\ by a $2m\times2m$ covariance matrix and a mean vector.
There is also a natural group of Gaussian unitaries, which form a projective representation of the affine symplectic group~$\ASp(2m)$.
See \cref{sec:bosons} for more details.
One challenge is that the symplectic group (affine or not) is non-compact.
This means that there is no invariant probability measure---in particular, the notion of a uniformly random Gaussian unitary and twirl as in \cref{eq:twirl intro} is not well-defined.
We will return to this point later.

However, we can also consider the maximally compact subgroup~$\Sp(2m) \cap \O(2m) \cong \U(m)$, which corresponds to the so-called \emph{passive} Gaussian unitaries and has an invariant probability measure.
The corresponding quantum states are known as the \emph{gauge-invariant} (or number-preserving) Gaussian states.
Similarly as before, if $\sigma$ is a gauge-invariant Gaussian state, then $\sigma^{\ot n}$ is in the closed span of the $n$-th tensor powers of passive Gaussian unitaries.
This yields the~following~result.

\begin{cor}[Bosonic gauge-invariant Gaussian random purification]\label{cor:bosons}
For any $m,n$, there is a channel $\cP_{n,m}^\text{Boson}$ from $\L^2(\R^m)^{\ot n}$ to $\L^2(\R^{2m})^{\ot n}$ such that the following holds:
for any gauge-invariant bosonoic Gaussian state~$\sigma$ on $\L^2(\R^m)$,
\begin{align*}
    \cP_{n,m}^\text{Boson}[\sigma^{\ot n}] =
\Ex_{\substack{\psi_\sigma \text{ Gaussian purification of } }\sigma} \psi_\sigma^{\ot n} =
\int_{\U(m)} (I \ot U_O^{\ot n}) \psi^{\text{std},\ot n}_\sigma (I \ot U_O^{\dagger,\ot n}) \ \dd O,
\end{align*}
where the expectation is over Gaussian purifications obtained by applying a uniformly random passive Gaussian unitary~$U_O$ to the standard purification, as in the right-hand side formula.
\end{cor}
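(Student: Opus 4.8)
The plan is to instantiate \cref{thm:main simplified} (more precisely, its $*$-algebra version \cref{thm:main technical}) for the Hilbert space $\cH = \L^2(\R^m)^{\ot n}$, a reference copy $\cH'\cong\cH$, and the group
\begin{align*}
G = \{\, U_O^{\ot n} : O \in \U(m) \,\} \subseteq \U(\cH),
\end{align*}
where $U_O$ is the passive Gaussian unitary associated with $O \in \U(m)$ (see \cref{sec:bosons}). The passive Gaussian unitaries form a genuine unitary representation of the compact group $\U(m)$ (the metaplectic cocycle is trivial on the number-conserving part), so $G$ is a closed subgroup of $\U(\cH)$ and \cref{thm:main simplified} produces a channel $\cP_G\colon\Lin(\cH)\to\Lin(\cH\ot\cH')$. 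Regrouping the $2n$ tensor factors of $\cH\ot\cH' = \L^2(\R^m)^{\ot n}\ot\L^2(\R^m)^{\ot n}$ into $n$ consecutive pairs and identifying $\L^2(\R^m)\ot\L^2(\R^m)\cong\L^2(\R^{2m})$ turns $\cP_G$ into a channel $\cP_{n,m}^{\text{Boson}}$ with the claimed input and output Hilbert spaces.

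To apply the theorem I must check that $\sigma^{\ot n}\in\C G = G''$ for every gauge-invariant Gaussian state $\sigma$; this is the containment already recorded in the text preceding the statement. Concretely, such a $\sigma$ equals a normalized $\exp\!\bigl(-\sum_{jk}h_{jk}\,a_j^\dagger a_k\bigr)$ for a positive semidefinite Hermitian matrix $h$, so $\sigma^{\ot n}$ is, up to normalization, the exponential of a collective number-conserving quadratic generator obtained from those of the $U_O^{\ot n}$ by complexifying the Lie algebra $\mathfrak{u}(m)$; since $G''$ contains the Lie-algebra action of $G$ and hence its complexification, it contains $\sigma^{\ot n}$.

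Now \cref{thm:main simplified} yields $\cP_{n,m}^{\text{Boson}}[\sigma^{\ot n}] = \int_G (I\ot g\tran)\,\psi^{\text{std}}_{\sigma^{\ot n}}\,(I\ot\bar g)\,\dd g$, and it remains to rewrite the right-hand side. First, $\sqrt{\sigma^{\ot n}} = (\sqrt\sigma)^{\ot n}$ and the maximally entangled vector of $\cH\ot\cH'$ factorizes over the $n$ pairs, so after the regrouping $\psi^{\text{std}}_{\sigma^{\ot n}} = \psi^{\text{std},\ot n}_\sigma$. Second, in the Fock basis the operators $a_j^\dagger a_k$ have real (indeed nonnegative) matrices, so $(a_j^\dagger a_k)\tran = a_k^\dagger a_j$; hence the set of passive Gaussian unitaries $\{\,\exp(i\sum_{jk}\tilde h_{jk}\,a_j^\dagger a_k) : \tilde h=\tilde h^\dagger\,\}$ is closed under transposition, and therefore so is $G$. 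Substituting $h = g\tran$ — which ranges over $G$ with Haar measure, since $g\mapsto g\tran$ is a Haar-preserving bijection of $G$ — together with $\bar g = (g\tran)^{-1} = h^\dagger$ turns the twirl into $\int_G (I\ot h)(\cdot)(I\ot h^\dagger)\,\dd h$; parametrizing $h = U_O^{\ot n}$ with $O$ Haar-random on $\U(m)$ gives exactly the last expression in the statement. Finally, for each $O$ the state $(I\ot U_O)\,\psi^{\text{std}}_\sigma\,(I\ot U_O^\dagger)$ is a purification of $\sigma$ (a unitary on the traced-out system leaves the marginal unchanged) and is Gaussian because the canonical purification of a Gaussian state is Gaussian (see \cref{sec:bosons}) and $U_O$ is a Gaussian unitary; since its $n$-th power equals $(I\ot U_O^{\ot n})\,\psi^{\text{std},\ot n}_\sigma\,(I\ot U_O^{\dagger,\ot n})$, the twirl coincides with the stated expectation over random Gaussian purifications.

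The one point requiring genuine care is the infinite-dimensional bookkeeping: that $G$ is an honest closed subgroup of $\U(\cH)$, that \cref{thm:main technical} applies verbatim in this setting, and that $\sigma^{\ot n}$ really lies in the von Neumann algebra $G''$ despite the generators $\sum_{jk}a_j^\dagger a_k$ being unbounded. This is handled either by invoking the containment already asserted before the statement, or by working on the dense domain of analytic vectors of the number operator, on which the complexified Lie-algebra argument is rigorous; the remaining manipulations (the transpose identity, Haar-invariance, factorization of the standard purification, and Gaussianity of the canonical purification) are routine given \cref{sec:bosons}.
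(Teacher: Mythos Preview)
Your high-level strategy---applying \cref{thm:main simplified} directly to the infinite-dimensional space $\cH=\L^2(\R^m)^{\ot n}$ with $G=\{U_O^{\ot n}:O\in\U(m)\}$---is not the paper's route. The paper explicitly remarks that one could try to formulate an infinite-dimensional variant of the main theorem, but instead avoids this: it decomposes $\L^2(\R^m)^{\ot n}$ into the finite-dimensional total-particle-number sectors $\cH_k^{(n)}=\bigoplus_{k_1+\cdots+k_n=k}\Sym^{k_1}(\C^m)\ot\cdots\ot\Sym^{k_n}(\C^m)$, applies the finite-dimensional \cref{thm:main simplified} on each sector to obtain channels $\cP_k^{(n)}$, and sets $\cP_{n,m}^{\text{Boson}}=\bigoplus_k\cP_k^{(n)}$. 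The verification then uses that twirling over the gauge subgroup $e^{i\theta N}$ (which sits inside the passive unitaries) kills the off-diagonal-in-$k$ blocks of $\psi^{\text{std}}_{\sigma^{\ot n}}$, so the global twirl decomposes as the direct sum of the per-sector twirls.

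Your final paragraph correctly locates the gap in your own approach: \cref{thm:main simplified,thm:main technical} are stated and proved only for finite-dimensional Hilbert spaces (this is said explicitly in \cref{sec:prelim}), and the proof relies on a decomposition \eqref{eq:decomposition} with finitely many blocks and finite-dimensional $\cL_\lambda,\cR_\lambda$. Your proposed remedies---invoking the asserted containment $\sigma^{\ot n}\in G''$, or passing to analytic vectors for the number operator---at best address why $\sigma^{\ot n}$ lies in the right algebra; they do not establish that the channel $\cP_G$ from the theorem is even well-defined on the infinite-dimensional $\cH$, nor that the twirl identity \eqref{eq:sym action} holds there. This is not unfixable, but as written it is a genuine hole; the paper's sector-by-sector reduction to finite dimensions is both cleaner and already rigorous given the theorems as stated.
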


This could be a useful tool for designing improved tomography protocols for gauge-invariant Gaussian states, exploring which we leave to future work.

We now return to the case of general Gaussian states and unitaries.
There exists a decomposition similar to the Schur-Weyl decomposition in \cref{eq:sw intro}, due to Kashiwara and Vergne~\cite{kashiwara1978segal,girardi2025gaussian}, where the symplectic group---or its double cover, the metaplectic group---acts by $n$-th tensor powers of Gaussian unitaries, and the commutant is spanned by the action of the orthogonal group~$\O(n)$ on $\cH^{\ot n} \cong \L^2(\R^{m \times n})$.%
\footnote{ Similarly, the affine symplectic or metaplectic group are dual to the stochastic orthogonal group.}
This gives a decomposition as in \cref{eq:sw intro}, where the~$\cV_{\lambda}$ and~$\cW_{\lambda}$ are irreducible representations of the metaplectic and orthogonal groups, respectively.
From this perspective, the challenge of constructing a uniformly random purification channel is closely related to the fact that the irreducible representations of the metaplectic group are infinite-dimensional, and therefore do not admit a maximally mixed state.
However, for applications such as tomography, it is not necessary that the purifications are \emph{uniformly} random.
We therefore view the construction of random purification channels for general bosonic Gaussian states---without assuming gauge invariance and possibly under suitable energy constraints and with a controlled approximation error---as an interesting open problem for future work.

%=============================================================================
\paragraph*{Related work.}
%=============================================================================
Closely related work has been done independently by Mele, Girardi, Chen, Fanizza, and Lami~\cite{relatedwork}.
We have arranged with these authors to coordinate the submission of our papers.

%=============================================================================
\section{Preliminaries and Notation}\label{sec:prelim}
%=============================================================================

Unless stated otherwise, all Hilbert spaces are finite-dimensional.
We write~$\Lin(\cH)$ for the linear operators on a Hilbert space~$\cH$, $\PSD(\cH)$ for the positive semidefinite operators on~$\cH$, and~$\S(\cH)$ for the set of density matrices on~$\cH$.
A quantum channel is a completely positive trace-preserving map~$\Phi\colon\Lin(\cH)\to\Lin(\cK)$, where~$\cH$ and~$\cK$ are Hilbert spaces.
We recall that any closed subgroup~$G \subseteq \U(\cH)$ has a Haar probability measure, which we denote by~$\dd g$.

%-----------------------------------------------------------------------------
\subsection{Algebras}
%-----------------------------------------------------------------------------
We consider a $*$-algebra of linear operators~$\cA \subseteq \Lin(\cH)$.
Any such algebra can be written as a direct sum of matrix algebras.
Concretely, there exists a decomposition of the Hilbert space
\begin{align}\label{eq:decomposition}
    \cH \cong \bigoplus_{\lambda \in \Lambda} \cL_\lambda \ot \cR_\lambda,
\end{align}
where~$\Lambda$ is a finite index set and $\cL_\lambda$ and $\cR_\lambda$ are Hilbert spaces, such that
\begin{align}\label{eq:decomposition algebras}
    \cA \cong \bigoplus_{\lambda \in \Lambda} \Lin(\cL_\lambda) \ot I_{\cR_\lambda},
    \quad\text{and hence}\quad
    \cA' \cong \bigoplus_{\lambda \in \Lambda} I_{\cL_\lambda} \ot \Lin(\cR_\lambda),
\end{align}
where $S' \subseteq \Lin(\cH)$ denotes the commutant of a subset~$S = S^* \subseteq \Lin(\cH)$, that is, the $*$-algebra of operators that commute with all the operators in~$S$.

We let~$\E_\cA \colon \Lin(\cH) \to \cA$ denote the orthogonal projection onto~$\cA$ with respect to the Hilbert-Schmidt inner product.
% This is also a conditional expectation (but this is not fundamentally important to know).
With respect to~\eqref{eq:decomposition}, it is~given~by
\begin{align}\label{eq:E}
    \E_\cA[M] \cong \bigoplus_{\lambda \in \Lambda} \tr_{\cR_\lambda}[ P_\lambda M P_\lambda ] \ot \frac{I_{\cR_\lambda}}{\dim \cR_\lambda}
    \qquad (M \in \Lin(\cH)),
\end{align}
where $P_\lambda$ denotes the orthogonal projections onto the summands in \cref{eq:decomposition}.
That is, $\E_\cA$ acts as a measurement of~$\lambda$ composed with completely depolarizing channels on each~$\cR_\lambda$.
Alternatively, the orthogonal projection can be written as follows:
Take any closed subgroup~$G \subseteq \U(\cH)$ that generates (equivalently, spans) the commutant algebra~$\cA'$.
Equivalently, by the double commutant theorem, $G' = \cA$.
Then the orthogonal projection is given by the Haar twirl over~$G$:
\begin{align}\label{eq:twirl}
    \E_\cA[M] = \int_G g M g^\dagger \,\dd g.
\end{align}
A choice that always works is to take~$G = \U(\cA')$, the group of unitaries in the $*$-algebra, but often~$G$ can be taken to be much smaller.%
\footnote{For example, the Pauli group spans the $*$-algebra of $2\times 2$-matrices.}
More generally, we may consider any ensemble of unitaries satisfying~\eqref{eq:twirl} (a ``unitary 1-design'' for~$\cA'$).

A well-known and motivating example is the following: if~$\cH = \cK^{\ot t}$ and $\cA \subseteq \Lin(\cK^{\ot t})$ is spanned by the permutation operators, then $\cA'$ is spanned by the group~$G = \{ U^{\ot t} : U \in \U(\cK) \}$.
This is known as \emph{Schur-Weyl duality}.

%=============================================================================
\section{A random purification channel for general symmetries}
%=============================================================================
In this section we state and prove a random purification theorem for general symmetries.
Throughout this section, let $\cH \cong \cH'$ be Hilbert spaces of the same dimension, with fixed bases~$\ket x$,~$\ket{x'}$.
For an operator~$A \in \Lin(\cH)$, we identify an operator~$A' \in \Lin(\cH')$ through the basis choice, so $\braket{x' | A | y'} = \braket{x | A | y}$. In a slight abuse of notation, when it is clear on which Hilbert space the operator acts, we write $A'=A$.
The transpose and conjugation are defined with respect to the choice of basis on $\cH$ or $\cH'$.
If we let~$\ket\Gamma = \sum_x \ket{xx'} \in \cH \ot \cH'$ denote the unnormalized maximally entangled state corresponding to these bases, then we have the following version of the transpose trick: $(A \ot I) \ket\Gamma = (I \ot A\tran) \ket\Gamma$.
For any quantum state $\rho \in \S(\cH)$, we define the standard purification by the formula: $\ket{\psi^\text{std}_\rho} \coloneqq (\sqrt\rho \ot I) \ket\Gamma$.
We denote by $(\cH \ot \cH')^{\U(\cA)}$ the subspace which consists of the vectors that are invariant under~$u \ot \bar u$ (not $u \ot \bar u$!) for any unitary~$u \in \U(\cA)$ in the algebra~$\cA$, and call this the \emph{$\cA$-symmetric subspace}.

We now present our general random purification theorem.
We note that the algebra~$\cA$ plays the role of the commutant of the group~$G$ in \cref{thm:main simplified}, which is an immediate corollary.

\begin{thm}[Random purification for general symmetries]\label{thm:main technical}
For any $*$-algebra $\cA \subseteq \Lin(\cH)$, there is a quantum channel
$\cP_\cA \colon \Lin(\cH) \to \Lin(\cH \ot \cH')$
with the following properties:
\begin{enumerate}
\item \emph{Random purification of invariant input states:}
For any input state~$\rho \in \S(\cH)$ that commutes with~$\cA$, i.e., $\rho \in \cA'$, we have
\begin{align}\label{eq:sym action}
    \cP_\cA[\rho]
= \parens*{ \idCh \ot \E_{\cA\tran} }[ \psi^\text{std}_\rho ]
= \int_G (I \ot g\tran) \psi^\text{std}_\rho (I \ot \bar g) \,\dd g,
\end{align}
where $G \subseteq \U(\cH)$ is any closed subgroup with~$\C G = \cA'$.

\item \emph{Symmetry of output states:}
For any input state~$\rho \in \S(\cH)$, the output state $\cP_\cA[\rho]$ is supported on the $\cA$-symmetric subspace~$(\cH \ot \cH')^{\U(\cA)}$ %(which consists of the vectors that are invariant under~$u \ot \bar u$ for any unitary $u \in \U(\cA)$ in the algebra~$\cA$),
and it is also $G$-invariant on the purifying system (commutes with~$I \ot g\tran$ for all $g \in G$).

\item \emph{Efficiency:}
If unitaries implementing the decomposition~\eqref{eq:decomposition} and the basis change~$\ket x \mapsto \ket{x'}$ can be implemented efficiently, then $\cP_\cA$ can be implemented efficiently.%in quantum~polynomial~time.

% \item \emph{Recovery map:}
% The channel~$\cP_\cA$ is the Petz recovery map for the partial trace over the purifying system and the maximally mixed state on the~$\cA$-symmetric subspace.

\item \emph{Explicit formulas:}
For all $\rho \in S(\cH)$,
\begin{align}\label{eq:explicit}
    \cP_\cA[\rho]
%&= \bigoplus_{\lambda\in\Lambda} \tfrac1{\dim\cL_\lambda} \ket{I}_{\cL_\lambda\cL_\lambda\dual}\bra{I}_{\cL_\lambda\cL_\lambda\dual} \ot \tr_{\cL_\lambda}[P_\lambda \rho P_\lambda] \ot \frac{I_{\cR\dual_\lambda}}{\dim \cR_\lambda} \\
= \Pi_\cA \parens*{ \rho \ot \Omega } \Pi_\cA
= \sqrt{\cP_\cA[I]} \parens*{ \rho \ot I } \sqrt{\cP_\cA[I]},
\end{align}
where $\Pi_\cA$ denotes the projection onto the~$\cA$-symmetric subspace, and~$\Omega \coloneqq \bigoplus_{\lambda\in\Lambda} \frac {\dim\cL_\lambda}{\dim\cR_\lambda} P_\lambda\tran$, with~$P_\lambda$ the orthogonal projections onto the direct summands in~\eqref{eq:decomposition}.
\end{enumerate}
\end{thm}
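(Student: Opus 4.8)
The plan is to construct $\cP_\cA$ explicitly via the formula $\cP_\cA[\rho] = \Pi_\cA(\rho \ot \Omega)\Pi_\cA$ from item~4, and then verify the four claimed properties in turn. First I would check that this really defines a quantum channel. Complete positivity is immediate since the map is a composition of conjugation by $\Pi_\cA$ (a projection, hence completely positive) with the map $\rho \mapsto \rho \ot \Omega$, which is completely positive because $\Omega = \bigoplus_\lambda \frac{\dim\cL_\lambda}{\dim\cR_\lambda}P_\lambda\tran$ is positive semidefinite (each coefficient is positive). Trace preservation is the content that requires a small computation: one must show $\tr[\cP_\cA[\rho]] = \tr[\rho]$, equivalently that $\tr_{\cH\ot\cH'}[\Pi_\cA(\rho\ot\Omega)\Pi_\cA] = \tr[\rho]$ for all $\rho$; by linearity and cyclicity this reduces to verifying $\tr_{\cH'}[\Pi_\cA(I\ot\Omega)\Pi_\cA]$, traced against $\rho$, equals $\tr[\rho]$, i.e.\ that this partial trace is the identity on $\cH$. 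Here I would work in the decomposition~\eqref{eq:decomposition}: $\Pi_\cA$ is the projector onto the $\cA$-symmetric subspace, which by the transpose-trick and the standard identification of $(\cL_\lambda \ot \cR_\lambda)\ot(\cL_\mu\ot\cR_\mu)'$-invariants under $u\ot\bar u$ decomposes blockwise, and the factors $\frac{\dim\cL_\lambda}{\dim\cR_\lambda}$ are precisely the normalization constants that make the partial trace collapse to $I_\cH$. This is the routine-but-central calculation; I'd do it by reducing to a single block $\cL_\lambda\ot\cR_\lambda$ and using that the maximally entangled state on $\cR_\lambda\ot\cR_\lambda'$ has squared norm $\dim\cR_\lambda$ while the one on $\cL_\lambda$ contributes $\dim\cL_\lambda$.

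Next, for item~1, I would take $\rho\in\cA'$ and compute $\Pi_\cA(\rho\ot\Omega)\Pi_\cA$ directly. Since $\rho\in\cA'$, in the decomposition $\rho \cong \bigoplus_\lambda I_{\cL_\lambda}\ot\rho_\lambda$ for density operators $\rho_\lambda$ on $\cR_\lambda$; writing everything out and using the transpose trick $(\sqrt\rho\ot I)\ket\Gamma = (I\ot\sqrt\rho\tran)\ket\Gamma$, the claim $\cP_\cA[\rho] = (\idCh\ot\E_{\cA\tran})[\psi^{\mathrm{std}}_\rho]$ becomes a blockwise identity: on each block, projecting $\psi^{\mathrm{std}}_\rho$ onto the symmetric subspace and then applying the depolarizing-on-$\cR_\lambda\tran$ channel $\E_{\cA\tran}$ (which by~\eqref{eq:E} is exactly the partial trace over $\cR_\lambda\tran$ tensored with the normalized identity) yields the same operator. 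The equality with the Haar-twirl form $\int_G (I\ot g\tran)\psi^{\mathrm{std}}_\rho(I\ot\bar g)\,\dd g$ then follows immediately from~\eqref{eq:twirl}, since $G$ with $\C G = \cA'$ has $G\tran$ spanning $\cA\tran = (\cA')\tran{}'$... more simply, $\E_{\cA\tran}[M] = \int_{G}(I)g\tran M \bar g\,\dd g$ because the transposes of a spanning set of $\cA'$ span $\cA\tran$, wait—one should be careful: $\E_{\cA\tran}$ is the projection onto $\cA\tran$, and its twirl form uses a group generating the \emph{commutant} of $\cA\tran$, which is $(\cA')\tran$, spanned by $\{\bar g : g\in G\}$; so $\E_{\cA\tran}[M] = \int_G \bar g M g\tran\,\dd g$ acting on $\cH'$, matching~\eqref{eq:sym action} after renaming. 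I would spell this transpose/commutant bookkeeping out carefully as it is the easiest place to slip a sign or a bar.

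For item~2, note $\Pi_\cA(\rho\ot\Omega)\Pi_\cA$ is manifestly supported on the range of $\Pi_\cA$, which is the $\cA$-symmetric subspace by definition; the $G$-invariance on the purifying system follows because $\Omega$ is a sum of projectors $P_\lambda\tran$ each lying in $\cA\tran$, hence commuting with every $g\tran$ (as $g\in G\subseteq\cA' \Rightarrow g\tran \in \cA\tran{}'$... again the commutant bookkeeping), and $\Pi_\cA$ likewise commutes with $u\ot\bar u$ hence with $I\ot g\tran$ after combining with the first factor—this needs the observation that $\Pi_\cA$ together with $\rho\ot I$ (for $\rho\in\cA'$, but here $\rho$ arbitrary so one uses only the purifying-system statement which doesn't need $\rho\in\cA'$) intertwines appropriately. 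For item~4, the first equality is the definition; the second, $\Pi_\cA(\rho\ot\Omega)\Pi_\cA = \sqrt{\cP_\cA[I]}(\rho\ot I)\sqrt{\cP_\cA[I]}$, I would prove by showing $\cP_\cA[I] = \Pi_\cA(I\ot\Omega)\Pi_\cA$ commutes with $\rho\ot I$ for every $\rho$—equivalently that $\Pi_\cA(I\ot\Omega)\Pi_\cA$ lies in $I_\cH\ot\Lin(\cH')$ after the symmetrization, which again is a blockwise check in~\eqref{eq:decomposition}: on block $\lambda$, $\Pi_\cA$ glues $\cL_\lambda$ to $\cL_\lambda'$ maximally entangled and $I\ot\Omega$ is a scalar on that block, so the product is (scalar)$\times$(maximally entangled projector on $\cL$)$\ot(\cdots)$, which is symmetric under conjugation by $\rho\ot I$ via the transpose trick. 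Once $[\cP_\cA[I],\rho\ot I]=0$ for all $\rho$ supported appropriately, pushing one $\sqrt{\cP_\cA[I]}$ through gives the stated Kraus-like form. Finally, item~3 is a direct consequence of the explicit formula: $\Pi_\cA$ is block-diagonal in the decomposition~\eqref{eq:decomposition} composed with the maximally-entangling map on the $\cL_\lambda$'s, which is implemented by the Fourier transform for $\cA$ (the unitary realizing~\eqref{eq:decomposition}) together with the basis change $\ket x\mapsto\ket{x'}$ and preparation of maximally entangled states; $\Omega$ is diagonal in the same basis; so the whole channel is a short circuit of these efficiently-implementable pieces plus ancilla preparation and a projective measurement onto the symmetric subspace.

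The main obstacle I anticipate is not any single hard idea but getting the transpose/conjugate/commutant bookkeeping exactly right—keeping straight which objects live on $\cH$ versus $\cH'$, when a group element appears as $g$, $g\tran$, $\bar g$, or $g\dagger$, and correspondingly whether a given algebra is $\cA$, $\cA'$, $\cA\tran$, or $(\cA')\tran$—together with tracking the dimension factors $\dim\cL_\lambda/\dim\cR_\lambda$ so that trace-preservation and the normalization of the depolarizing channels in~\eqref{eq:E} line up. I would therefore organize the proof around a single explicit block-diagonal picture fixed once at the start, verify all four items within that picture, and only at the end translate back to the basis-free statements~\eqref{eq:sym action}–\eqref{eq:explicit}.
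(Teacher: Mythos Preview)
Your overall strategy---fix the block decomposition once, define the channel by the formula $\cP_\cA[\rho]=\Pi_\cA(\rho\ot\Omega)\Pi_\cA$, and verify everything blockwise---is essentially the paper's approach, just with a different starting point: the paper instead \emph{defines} $\cP_\cA$ by the explicit block formula
\[
\cP_\cA[\rho]=\bigoplus_\lambda \frac{\ketbra{\Gamma}{\Gamma}_{\cL_\lambda\cL_\lambda'}}{\dim\cL_\lambda}\ot\tr_{\cL_\lambda}[P_\lambda\rho P_\lambda]\ot\frac{I_{\cR_\lambda'}}{\dim\cR_\lambda},
\]
which is manifestly a channel, and then verifies item~4 as a consequence. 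Your route trades that for a separate CP/TP check; both are fine.

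There is, however, a genuine gap in your argument for the \emph{second} equality of item~4. You propose to show that $\cP_\cA[I]$ commutes with $\rho\ot I$ for all $\rho$ and then ``push one $\sqrt{\cP_\cA[I]}$ through.'' This commutation is false in general. Already in the single-block case $\cA=\Lin(\cH)$ one has $\cP_\cA[I]\propto\ketbra{\Gamma}{\Gamma}$, and $(\rho\ot I)\ketbra{\Gamma}{\Gamma}=(I\ot\rho\tran)\ketbra{\Gamma}{\Gamma}\neq\ketbra{\Gamma}{\Gamma}(\rho\ot I)$ unless $\rho$ is scalar. (Commutation \emph{does} hold when $\rho\in\cA'$, since then $\rho$ is trivial on each $\cL_\lambda$, but item~4 is claimed for arbitrary $\rho$.) The identity you want is nevertheless true, and the clean way to get it---which is what the paper does---is to verify directly that
\[
(I\ot\sqrt\Omega)\,\Pi_\cA=\sqrt{\cP_\cA[I]},
\]
a one-line block computation using that $\Omega$ is scalar on each block and $\ketbra{\Gamma}{\Gamma}^2=(\dim\cL_\lambda)\ketbra{\Gamma}{\Gamma}$. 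Since $\Pi_\cA$ and $I\ot\sqrt\Omega$ commute and are self-adjoint, this immediately gives $\Pi_\cA(\rho\ot\Omega)\Pi_\cA=\sqrt{\cP_\cA[I]}(\rho\ot I)\sqrt{\cP_\cA[I]}$ with no commutation hypothesis on $\rho$.

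A smaller remark on item~3: implementing the channel as ``prepare $\rho\ot\Omega$ and projectively measure $\Pi_\cA$'' does not work as stated, because $\tr[(\rho\ot\Omega)\Pi_\cA]=\tr\rho$ while $\tr[\rho\ot\Omega]=\tr\rho\cdot\sum_\lambda(\dim\cL_\lambda)^2$, so the projection fails with nonzero probability whenever $\sum_\lambda(\dim\cL_\lambda)^2>1$. The efficient implementation is the one your block picture actually suggests (and the paper spells out): measure $\lambda$, discard $\cL_\lambda$, and freshly prepare the maximally entangled state on $\cL_\lambda\cL_\lambda'$ together with the maximally mixed state on $\cR_\lambda'$.
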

\begin{proof}
Recall the decompositions~\eqref{eq:decomposition} and~\eqref{eq:decomposition algebras}, which we may assume hold with equality:
\begin{align*}
    \cH = \bigoplus_{\lambda \in \Lambda} \cL_\lambda \ot \cR_\lambda, \quad
    \cA = \bigoplus_{\lambda \in \Lambda} \Lin(\cL_\lambda) \ot I_{\cR_\lambda},
    \quad
    \cA' = \bigoplus_{\lambda \in \Lambda} I_{\cL_\lambda} \ot \Lin(\cR_\lambda).
\end{align*}
Furthermore, we may assume that we have a similar decomposition
\begin{align*}
    \cH' = \bigoplus_{\lambda \in \Lambda} \cL'_\lambda \ot \cR'_\lambda, \quad
    \cA\tran = \bigoplus_{\lambda \in \Lambda} \Lin(\cL'_\lambda) \ot I_{\cR'_\lambda},
    \quad
    (\cA')\tran = \bigoplus_{\lambda \in \Lambda} I_{\cL'_\lambda} \ot \Lin(\cR'_\lambda),
\end{align*}
where $\cL'_\lambda = \cL_\lambda$ and $\cR'_\lambda = \cR_\lambda$ (we still use primes to distinguis the purifying space from the original one), and that we work with a basis obtained by picking bases on these tensor factors.%
\footnote{\label{footnote:HHprime}To see this, note that if $U$ implements the decomposition for~$\cH$ and $J \colon \cH' \to \cH, \ket{x'} \mapsto \ket{x}$.
Then, $\bar U \circ \bar J$ implements the desired decomposition for~$\cH'$.}
In particular:
\begin{align}\label{eq:transpose}
    x = \bigoplus_{\lambda \in \Lambda} a_\lambda \ot b_\lambda
\quad\Rightarrow\quad
    x\tran = \bigoplus_{\lambda \in \Lambda} a_\lambda\tran \ot b_\lambda\tran.
\end{align}
In particular, if $P_\lambda$ denotes the projections onto the direct summands of~$\cH$, then~$P_\lambda\tran$ are the projections onto the direct summands of~$\cH'$.
\begin{enumerate}
\item We will compute $\parens*{ \idCh \ot \E_{\cA\tran} }[ \psi^\text{std}_\rho ]$ and see that it can be realized by a natural quantum channel acting on~$\rho$.
We start by computing the standard purification with respect to the above decomposition.
Because~$\ket\Gamma$ has the property that~$(U \ot \bar U)\ket\Gamma = \ket\Gamma$ for all~$U \in \U(\cH)$, we see from \cref{eq:transpose} that
\begin{align}\label{eq:EPR}
    \ket\Gamma = \bigoplus_\lambda \ket{\Gamma}_{\cL_\lambda\cL'_\lambda} \ot \ket{\Gamma}_{\cR_\lambda\cR'_\lambda}
\in \cH \ot \cH',
\end{align}
where the right-hand side are the unnormalized maximally entangled states in~$\cL_\lambda \ot \cL'_\lambda$ and~$\cR_\lambda \ot \cR'_\lambda$ (with respect to the bases chosen above), with squared norm equal to $\dim(\cL_\lambda)$ and $\dim(\cR_\lambda)$, respectively.
Since $\rho \in \cA'$, we see that
\begin{align}\label{eq:rho}
    \rho = \bigoplus_{\lambda\in\Lambda} I_{\cL_\lambda} \ot \rho_\lambda,
\quad\text{where}\quad
    \rho_\lambda = \frac1{\dim\cL_\lambda} \tr_{\cL_\lambda}[P_\lambda \rho P_\lambda] \in \PSD(\cR_\lambda).
\end{align}
% for certain operators $\rho_\lambda \in \PSD(\cR_\lambda)$.
Thus, the standard purification is given by
\begin{align*}
    \ket{\psi^\text{std}_\rho} = \bigoplus_{\lambda\in\Lambda} \ket{\Gamma}_{\cL_\lambda\cL_\lambda'} \ot \ket{\psi_\lambda}_{\cR_\lambda\cR_\lambda'},
\end{align*}
where each $\ket{\psi_\lambda}$ is the standard purification of~$\rho_\lambda$ (withr respect to the bases chosen above).
As a density operator,
\begin{align*}
    \proj{\psi^\text{std}_\rho} = \bigoplus_{\lambda,\mu\in\Lambda} \ket{\Gamma}_{\cL_\lambda\cL_\lambda'}\bra{\Gamma}_{\cL_\mu\cL_\mu'} \ot \ket{\psi_\lambda}_{\cR_\lambda\cR_\lambda'}\bra{\psi_\mu}_{\cR_\mu\cR_\mu'},
\end{align*}
From \cref{eq:E}, we see that
\begin{align*}
    \parens*{ \idCh \ot \E_{\cA\tran} }\mleft[ \proj{\psi^\text{std}_\rho} \mright]
&= \bigoplus_{\lambda\in\Lambda} \ket{\Gamma}_{\cL_\lambda\cL_\lambda'}\bra{\Gamma}_{\cL_\lambda\cL_\lambda'} \ot \tr_{\cR_\lambda'}[\ket{\psi_\lambda}_{\cR_\lambda\cR_\lambda'}\bra{\psi_\lambda}_{\cR_\lambda\cR_\lambda'}] \ot \frac{I_{\cR'_\lambda}}{\dim \cR_\lambda} \\
&= \bigoplus_{\lambda\in\Lambda} \ket{\Gamma}_{\cL_\lambda\cL_\lambda'}\bra{\Gamma}_{\cL_\lambda\cL_\lambda'} \ot \rho_{\lambda} \ot \frac{I_{\cR'_\lambda}}{\dim \cR_\lambda} \\
&= \bigoplus_{\lambda\in\Lambda} \frac{\ket{\Gamma}_{\cL_\lambda\cL_\lambda'}\bra{\Gamma}_{\cL_\lambda\cL_\lambda'}}{\dim\cL_\lambda} \ot \tr_{\cL_\lambda}[P_\lambda \rho P_\lambda] \ot \frac{I_{\cR'_\lambda}}{\dim \cR_\lambda}.
\end{align*}
We may take the last line as the \emph{definition} of the desired channel: for all $\rho \in \S(\cH)$, we define
\begin{align}\label{eq:def in proof}
    \cP_\cA[\rho] := \bigoplus_{\lambda\in\Lambda} \frac{\ket{\Gamma}_{\cL_\lambda\cL_\lambda'}\bra{\Gamma}_{\cL_\lambda\cL_\lambda'}}{\dim\cL_\lambda} \ot \tr_{\cL_\lambda}[P_\lambda \rho P_\lambda] \ot \frac{I_{\cR'_\lambda}}{\dim \cR_\lambda},
\end{align}
which satisfies the first equation in~\eqref{eq:sym action}; the second equation follows at once from~\eqref{eq:twirl}.

\item
Because $\U(\cA) = \bigoplus_{\lambda\in\Lambda} \U(\cL_\lambda) \ot I_{\cR_\lambda}$, it is clear from \cref{eq:transpose} that
% Since~$(U \ot \bar U) \ket\Gamma = \ket\Gamma$ for all~$U \in \U(\cH)$, we see from~\eqref{eq:EPR} that the maximally entangled vectors $\ket{\Gamma}_{\cL_\lambda\cL'_\lambda}$ are invariant under the action by~$(u \ot \bar u)$ for $u \in \U(\cA)$.
% Since $\cA$ acts irreducibly on the tensor factors~$\cL_\lambda$, while~$\cA\tran$ acts irreducibly on the tensor factors~$\cL'_\lambda$, we see that
the projection onto the $\cA$-symmetric subspace~$(\cH \ot \cH')^{\U(\cA)}$ takes the form
\begin{align}\label{eq:pi}
    \Pi_\cA = \bigoplus_{\lambda\in\Lambda} \frac{\ket{\Gamma}_{\cL_\lambda\cL'_\lambda}\bra{\Gamma}_{\cL_\lambda\cL'_\lambda}}{\dim\cL_\lambda} \ot I_{\cR_\lambda} \ot I_{\cR'_\lambda}.
\end{align}
Thus, \eqref{eq:def in proof} shows that the output states of~$\cP_\lambda$ are supported on this subspace.
Furthermore, the second equation in~\eqref{eq:sym action} shows that the output states commute with~$(I \ot g\tran)$ for all~$g \in G$.

\item
Up to the decompositions of~$\cH$ and of~$\cH'$ (which can be obtained from the one for~$\cH$ and the unitary~$\ket x \mapsto \ket{x'}$, see footnote~\ref{footnote:HHprime}), the channel~\eqref{eq:def in proof} can be implemented by the following efficient steps:
given as input an arbitary quantum state~$\rho$,
\begin{enumerate}[noitemsep]
    \item Measure~$\lambda$ to obtain the (unnormalized) post-measurement state $P_\lambda \rho P_\lambda$ on~$\cL_\lambda \ot \cR_\lambda$.
    \item Discard the register~$\cL_\lambda$, while keeping the register~$\cR_\lambda$.
    \item Prepare the maximally entangled state $\frac1{\sqrt{\dim\cL_\lambda}} \ket{\Gamma}_{\cL_\lambda\cL_\lambda'}$, with $\cL_\lambda'$ an additional register.
    \item Prepare the maximally mixed state on an additional register~$\cR_\lambda'$.
\end{enumerate}

\item[4.]
We need to verify the following formulas:
\begin{align*}
    \cP_\cA[\rho]
=   \Pi_\cA \parens*{ \rho \ot \Omega } \Pi_\cA
=   \sqrt{\cP_\cA[I]} \parens*{ \rho \ot I } \sqrt{\cP_\cA[I]}.
% =   \Pi_\cA \parens*{ \tr_{\cH'}(\Pi_\cA)^{-1/2} \rho \tr_{\cH'}(\Pi_\cA)^{-1/2} \ot I_{\cH'} } \Pi_\cA,
\end{align*}
% where the right-hand side is the Petz map described in item~4.
The first equality is immediate from \cref{eq:rho,eq:pi,eq:def in proof}.
% To establish the other equalities, we only need to verify that
For the second equality, we need that
\begin{align*}
    \parens*{ I \ot \sqrt\Omega} \Pi_\cA = \sqrt{\cP_\cA[I]};% = \parens*{ \tr_{\cH'}(\Pi_\cA)^{-1/2} \ot I } \Pi_\cA;
\end{align*}
% but these are all equal to
and indeed both are equal to
\begin{align*}
    \bigoplus_{\lambda\in\Lambda} \frac{\ket{\Gamma}_{\cL_\lambda\cL'_\lambda}\bra{\Gamma}_{\cL_\lambda\cL'_\lambda}}{\sqrt{\dim\cL_\lambda \dim\cR_\lambda}} \ot I_{\cR_\lambda} \ot I_{\cR'_\lambda},
\end{align*}
as follows readily from \cref{eq:pi,eq:def in proof}.
\end{enumerate}
\end{proof}

%=============================================================================
\subsection{Examples}
%=============================================================================
We discuss some easy examples of symmetries to illustrate the theorem.
In all examples, we take~$\cH = \cH'$ with the computational basis, and it holds that $\cA = \cA\tran$ and $G = G\tran$ for this choice.

\paragraph{Classical states:}
Consider $\cH = \C^d$, and let~$\cA$ be the algebra consisting of the diagonal matrices.
This algebra is its own commutant, $\cA' = \cA$, and the terms~$L_\lambda \ot R_\lambda$ in the decomposition~\eqref{eq:decomposition} are one-dimensional and spanned by~$\proj x$ for $x \in [d]$.
Note that $\rho \in \cA'$ if and only if $\rho$ is a diagonal density matrix, that is, a classical state.
Moreover, $\E_{\cA\tran}$ is the standard basis measurement channel.
The random purification channel obtained from \cref{thm:main technical} is the classical copying channel:
\[ \cP^\text{Copy}[\rho] = \sum_{x=1}^d \braket{x | \rho | x} \proj{xx}. \]

Next, we consider three examples where $\cH = (\C^d)^{\ot n}$.
The unitary group $\U(d)$ acts by operators~$U^{\ot n}$ for $U \in \U(d)$, while the permutation group~$S_n$ acts by operators~$R_\pi$ for $\pi \in S_n$ that map $R_\pi \ket{x_1,\dots,x_n} = \ket{x_{\pi^{-1}(1)},\dots,x_{\pi^{-1}(n)}}$ for~$x \in [d]^n$.
Clearly, these two actions commute.
Schur-Weyl duality states that they generate each other's commutant.
That is, if we denote by~$\cA_{\U(d)}$ and~$\cA_{S_n}$ the algebras generated by these group actions, then $\cA_{\U(d)}' = \cA_{S_n}$ and vice versa.
We have
\[
    (\C^d)^n \cong \bigoplus_{\lambda} V_{\lambda} \ot W_{\lambda},
\]
where the $V_\lambda$ are irreducible representations of~$\U(d)$ and the $W_\lambda$ are irreducible representations of~$S_n$, labeled by Young diagrams~$\lambda$ with $n$ boxes and no more than~$d$ rows.
The decomposition in \cref{eq:decomposition} for either algebra can be obtained from this at once.

\paragraph{Permutation-invariant states:}
Let $\cA = \cA_{\U(d)}$, so $\cA' = \cA_{S_n}$.
We have that $\rho \in \cA'$ if it is permutation-invariant, that is, such that $R_\pi \rho = \rho R_\pi$ for all~$\pi \in S_n$.
In particular, this is the case if $\rho = \sigma^{\ot n}$ consists of $n$ copies of a state~$\sigma$.
Then the channel in \cref{thm:main technical} is the original random purification channel from~\cite{tang2025conjugate,pelecanos2025mixed,girardi2025random}, which produces a state that is a purification of~$\rho$ with a random unitary tensor power~$U^{\ot n}$ applied, see \cref{cor:og channel}.

\paragraph{Werner states:}
We can also take $\cA = \cA_{S_n}$, so $\cA' = \cA_{\U(d)}$.
Then $\rho \in \cA'$ means that the state is invariant under all product unitaries: $U^{\ot n} \rho = \rho U^{\ot n}$ for all~$U \in \U(d)$.
Such a state is called a (multipartite) Werner state.
The channel in \cref{thm:main technical} produces a state that is a purification of~$\rho$ with a random permutation applied.

\paragraph{Symmetric Werner states:}
Finally we take $\cA$ to be the algebra generated by both~$\cA_{\U(d)}$ and~$\cA_{S_n}$.
Then $\rho \in \cA'$ if it is permutation-invariant as well as unitarily invariant---such states are also called symmetric Werner states.
They can be written as
$\rho = \sum_{\lambda} p_\lambda \tau_{\lambda}$
for some probability distribution $(p_\lambda)$ over Young diagrams~$\lambda$, and $\tau_{\lambda} = P_\lambda / \!\rk P_\lambda$ the maximally mixed state on the isotypical component labeled by~$\lambda$.
The channel in \cref{thm:main technical} measures~$\lambda$ and prepares another copy of~$\tau_\lambda$:
\begin{align*}
    \cP[\rho] = \sum_\lambda P_\lambda \rho P_\lambda \ot \tau_{\lambda}.
\end{align*}

%=============================================================================
\section{Random purification and tomography for fermionic Gaussian states}\label{sec:fermions}
%=============================================================================
We now explain how to use \cref{thm:main technical} to obtain a random purification theorem for fermionic Gaussian states---also known as (quasi)free fermionic states---and apply it to construct a tomography protocol that achieves optimal sample complexity.

%-----------------------------------------------------------------------------
\subsection{Gaussian states and unitaries}
%-----------------------------------------------------------------------------
We start by defining these states and related concepts, see e.g.\ \cite[App.~B]{helsen2022matchgate}.

\paragraph{Fermionic Hilbert space and operators:}
If we have $m$ fermionic modes, then the fermionic Hilbert space is the so-called fermionic Fock space
\begin{align*}
    \cH_m = \bigwedge \C^m = \bigoplus_{k=0}^m \bigwedge^k \C^m. % \cong (\C^2)^{\ot m}.
\end{align*}
There is a natural algebra on $\cH_m$ generated by the fermionic creation and annihilation operators, denoted $a_j^\dagger$ and~$a_j$ for $j \in [m]$, which satisfy the canonical anticommutation relations
\begin{align*}
    \{a_j, a_k^\dagger\} = \delta_{jk} \id, \qquad \{a_j^\dagger, a_k^\dagger\} = \{a_j, a_k\} = 0.
\end{align*}
% It gives rise to a occupation number basis~$\ket x$ labeled by bitstrings~$x \in \bit^m$.
We also define the number operators by $N_j = a_j^\dagger a_j$; they satisfy $[N_j, N_k] = 0$.
% and~$N_j \ket x = x_j \ket x$ for all~$x \in \bit^m$.
The Majorana operators~$c_j$ for $j \in [2m]$ are defined by
$c_{2j-1} = a_j^\dagger + a_j$ and $c_{2j} =  i(a_j^\dagger - a_j)$,
which satisfy
\begin{align*}
    \{c_j, c_k\} = 2\delta_{jk} \id.
\end{align*}
The parity operator is defined by $P = i^m c_1 c_1 \dots c_{2m} = (-1)^{\sum_i N_i}$.
It anticommutes with all Majorana operators.
The Hilbert space splits into the~$\pm 1$ eigenspaces of~$P$, called the even and odd parity subspaces~$\cH_m^+$ and~$\cH_m^-$.
The even and odd parity subspaces are spanned by basis states~$\ket x$ with respectively even or odd Hamming weight $\abs{x}$.

The fermionic Fock space be identified with the space of $m$ qubits,
\begin{align*}
    \cH_m \cong (\C^2)^{\ot m}.
\end{align*}
We denote the computational basis by~$\ket x$ for $x\in\bit^m$.
The Jordan-Wigner map allows us to define the fermionic algebra on the $m$-qubit Hilbert space.
For the Majorana operators,
\begin{align*}
    c_{2j-1} \mapsto Z_1 \dots Z_{j-1} X_j, \qquad c_{2j} \mapsto Z_1\dots Z_{j-1} Y_j
\end{align*}
where $P = X,Y,Z$ are the Pauli operators, and $P_j$ denotes the Pauli operator on the $j$th qubit.
Then the number operator $N_j$ acts as $\proj{1}$ on the $j$th qubit, and the parity operator is mapped to
\begin{align*}
    P \mapsto Z_1 Z_2 \cdots Z_m
\end{align*}
by the Jordan-Wigner map.

\paragraph{Gaussian unitaries:}
For $R \in \SO(2m)$, there are unitaries $U_R$ that map
\begin{align}\label{eq:gaussian U}
    U_R c_j U_R^\dagger = \sum_{k=1}^{2m} R_{kj} c_k.
\end{align}
This defines a projective representation\footnote{There is a $\pm 1$ sign ambiguity; the projective representation can be lifted to a genuine representation of $\Spin(2m)$, the double cover of $\SO(2m)$.} of $\SO(2m)$ on $\cH_m$.
The subspaces are $\cH_m^+$ and $\cH_m^-$ are irreducible subrepresentations.
A unitary that acts as in~\eqref{eq:gaussian U} is called a Gaussian unitary.
A quadratic Hamiltonian is a Hamiltonian of the form
\begin{align}\label{eq:quadratic hamiltonian}
    H = i\sum_{j,k=1}^m h_{jk} c_j c_k = 2 \sum_{j<k} h_{jk} c_j c_k,
\end{align}
where $h_{jk} = -h_{kj} \in \RR$ (so $h \in \so(2m)$).
Then $U_R = \exp(i H t)$ is a Gaussian unitary, and all Gaussian unitaries can be written in this way (up to an irrelevant phase).

\paragraph{Gaussian states:}
A pure state on~$\cH_m$ is called Gaussian if it is the unique ground state of a quadratic Hamiltonian~\eqref{eq:quadratic hamiltonian}.
We denote the set of pure Gaussian states by $\cG_m$, and denote the subsets of even and odd parity Gaussian states by $\cG_m^+$ and $\cG_m^-$, respectively.

More generally, mixed Gaussian states are Gibbs states of quadratic Hamiltonians,
\begin{align}\label{eq:fermionic gibbs state}
    \rho = \frac1Z e^{-H}, \quad Z = \tr[e^{-H}],
\end{align}
and limits thereof.
There always exists a Gaussian unitary that transforms the Hamiltonian $H$ to be of the form
\begin{align}\label{eq:std H}
    H = \sum_{j=1}^m \beta_j N_j
\end{align}
(up to an irrelevant additive constant).
This allows one to bring Gaussian states in a standard form.

Furthermore, we can choose some fixed pure Gaussian states~$\ket{\psi_m^{\pm}} \in \cH_m^{\pm}$, which we will take to be highest weight vectors for $\so(2m)$ with respect to an appropriate choice of positive roots, such that all pure Gaussian states can be written as $U_R\ket{\psi_+}$ or $U_R\ket{\psi_-}$ for some Gaussian unitary~$U_R$.
In particular, by taking the Haar measure on~$\SO(2m)$, we obtain unique probability measures on~$\cG_m^+$ and~$\cG_m^-$ that are invariant under Gaussian unitaries, which we will simply denote by $d\phi$.
The highest weights for~$\cH_{\pm}$ are given~by
\begin{align}\label{eq:highest weight}
    \alpha = (\tfrac12, \dots, \tfrac12, \tfrac12), \qquad  \beta = (\tfrac12, \dots, \tfrac12, -\tfrac12)
\end{align}
where for even $m$, $\cG_m^+$ has highest weight $\alpha$ and $\cG_m^-$ has highest weight $\beta$, while for odd $m$ this is reversed.
Details on these representations can be found in \cite{fulton2013representation}.

%-----------------------------------------------------------------------------
\subsection{Random purification of Gaussian states}\label{sub:gaussian purification}
%-----------------------------------------------------------------------------
We will now show that there exists a random purification channel that maps any mixed Gaussian state to a random Gaussian purification.
This will follow directly from \cref{thm:main technical} but for a small subtlety that we have to address carefully---namely, when we take the tensor product~$\cH_m \ot \cH_m'$ to purify the state, we want to identify this with the Hilbert space~$\cH_{2m}$ of~$2m$ fermionic modes, and we want the purification to be a Gaussian state on~$2m$ modes.
To this end we take $\cH'_m = \cH_m$, so that $\cH_m \ot \cH'_m \cong (\C^2)^{\ot m} \ot (\C^2)^{\ot m} \cong (\C^2)^{\ot 2m} \cong \cH_{2m}$.
Using the Jordan-Wigner map, the Majorana operators~$\tilde c_j$ for $j \in [4m]$ are given by
\begin{align*}
    \tilde c_j &= c_j \ot \id \quad \text{ for } j \in [2m], \\
    \tilde c_{2m + j} &= P \ot c_{j}  \quad \text{ for } j \in [2m].
\end{align*}
We will now choose the following basis on~$\cH'_m$,
\begin{align*}
    \ket{x'} = (-1)^{\gamma(x)} \ket x \qquad (x \in \bit^m),
\end{align*}
where
\begin{align*}
    \gamma(x) = \begin{cases}
        0 & \text{ if } \abs{x} \equiv 0 \text{ or } 1 \pmod 4, \\
        1 & \text{ if } \abs{x} \equiv 2 \text{ or } 3 \pmod 4.
    \end{cases}
\end{align*}
Note that the transpose and conjugate with respect to this basis equals the transpose and conjugate, respectively, with respect to the standard basis.
With respect to our choice of bases, the standard purification of a state~$\rho \in \S(\cH_m)$ is given by
\begin{align*}
    \ket{\psi^\text{std}_\rho}
= \left( \sqrt\rho \ot \id \right) \left( \sum_{x \in \bit^m} \ket{xx'} \right)
= \left( \sqrt\rho \ot \id \right) \left( \sum_{x \in \bit^m} (-1)^{\gamma(x)} \ket{xx} \right).
\end{align*}
We will now show that if $\rho$ is Gaussian, then its standard purification is again a Gaussian state~$\cH_{2m}$.

\begin{lem}\label{lem:gaussian purification}
    Let $\cH_m$ be the Hilbert space of $m$ fermionic modes.
    \begin{enumerate}
        \item If $U$ is a Gaussian unitary, then $\overline{U}$ and $U\tran$ are also Gaussian unitaries. Moreover, $U \ot \id$ and $\id \ot U$ are Gaussian unitaries on $\cH_{2m}$.
        \item If $\rho \in \S(\cH)$ is Gaussian, then $\ket{\psi^\text{std}_\rho}$ is a pure Gaussian state in $\cH_{2m}^+$.
    \end{enumerate}
\end{lem}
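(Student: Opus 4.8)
The plan is to prove the two parts in order, using the explicit description of the Jordan--Wigner embedding $\cH_m\ot\cH'_m\cong\cH_{2m}$ via the Majorana operators $\tilde c_j$ given just above the lemma, together with the characterization of Gaussian unitaries through their action \eqref{eq:gaussian U} on Majorana operators and of Gaussian states as ground states of quadratic Hamiltonians \eqref{eq:quadratic hamiltonian}.

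For part (1), I would first recall that $U=U_R$ is Gaussian iff $U c_j U^\dagger=\sum_k R_{kj}c_k$ for some $R\in\SO(2m)$. For $\overline U$ and $U\tran$: under the Jordan--Wigner map the Majorana operators $c_{2j-1}\mapsto Z_1\cdots Z_{j-1}X_j$ and $c_{2j}\mapsto Z_1\cdots Z_{j-1}Y_j$, so complex conjugation in the computational basis fixes each $c_{2j-1}$ and sends $c_{2j}\mapsto -c_{2j}$ (since $X,Z$ are real and $Y$ is imaginary). Hence conjugation is itself implemented by a Gaussian unitary $U_D$ with $D=\diag(1,-1,1,-1,\dots)\in\O(2m)$ — strictly, since $\det D=(-1)^m$, one may need to compose with one more reflection to land in $\SO(2m)$, but in any case $\overline U=U_D U U_D^{-1}$ conjugates Majoranas linearly, so it is Gaussian; and $U\tran=\overline{U^\dagger}$ is then also Gaussian. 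For the embeddings: from $\tilde c_j=c_j\ot\id$ ($j\in[2m]$) and $\tilde c_{2m+j}=P\ot c_j$ ($j\in[2m]$), a Gaussian unitary $U\ot\id$ on the first factor acts on $\tilde c_1,\dots,\tilde c_{2m}$ by the matrix $R$, and commutes with $P\ot\id$ (Gaussian unitaries preserve parity) hence acts on $\tilde c_{2m+1},\dots,\tilde c_{4m}$ trivially; so $U\ot\id$ is Gaussian on $\cH_{2m}$ with block-diagonal orthogonal matrix $R\op I_{2m}$. Symmetrically, $\id\ot U$ commutes with all $\tilde c_j$ for $j\in[2m]$ and transforms $\tilde c_{2m+j}=P\ot c_j$ linearly among themselves (the $P$ factor is untouched), so it is Gaussian with matrix $I_{2m}\op R'$ for the appropriate $R'$.

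For part (2), the key point is to exploit the transpose trick and the uniqueness of purifications. Write a Gaussian $\rho=\frac1Z e^{-H}$ (or a limit thereof) with $H$ a quadratic Hamiltonian; by part (1) applied to the Gaussian unitary diagonalizing $H$, it suffices to treat $\rho$ in the standard form \eqref{eq:std H}, $H=\sum_j\beta_j N_j$, i.e. $\rho=\bigotimes_j\big(p_j\proj 0+(1-p_j)\proj 1\big)$ in the qubit picture. Now $\ket{\psi^\text{std}_\rho}=(\sqrt\rho\ot\id)\sum_x(-1)^{\gamma(x)}\ket{xx}$; I claim the carefully chosen sign $(-1)^{\gamma(x)}$ is exactly what makes $\sum_x(-1)^{\gamma(x)}\ket{xx}$ a (unnormalized) Gaussian state on $\cH_{2m}$ — concretely one checks it is annihilated, for each $j$, by a combination of the paired Majoranas $\tilde c_j,\tilde c_{2m+j}$, which one verifies directly from the formulas for $\tilde c_j$ and the sign rule $\gamma$; equivalently it is $U_{R_0}\ket{\psi_+}$ for an explicit orthogonal $R_0$ that pairs mode $j$ with mode $m+j$, and lies in the even subspace $\cH_{2m}^+$ because each term $\ket{xx}$ has even Hamming weight $2\abs x$. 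Then $\sqrt\rho\ot\id$ is, up to normalization, $e^{-\frac12\sum_j\beta_j N_j}\ot\id$, which is $e^{-\frac12 H'}$ for the quadratic operator $H'=\sum_j\beta_j N_j\ot\id$ on $\cH_{2m}$; applying the exponential of a quadratic operator to a Gaussian state yields a Gaussian state (equivalently, conjugating the quadratic annihilating Hamiltonian by $e^{-\frac12 H'}$ again gives a quadratic operator whose kernel is one-dimensional), and it preserves parity, so $\ket{\psi^\text{std}_\rho}$ stays in $\cH_{2m}^+$. The limiting case (pure-state marginals, where $\sqrt\rho$ is singular) follows since pure Gaussian states are closed and the construction is continuous. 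The main obstacle is the bookkeeping in the second claim: getting the sign convention $\gamma(x)$ and the Jordan--Wigner strings in $\tilde c_{2m+j}=P\ot c_j$ to line up so that $\sum_x(-1)^{\gamma(x)}\ket{xx}$ is genuinely annihilated by the right Majorana combinations — this is the one place a sign error would break the proof, and it is presumably why the precise form of $\gamma$ was singled out in the setup.
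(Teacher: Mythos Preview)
Your proposal is essentially correct, and for Part~1 it amounts to the same argument as the paper's, just phrased in terms of the adjoint action on Majoranas rather than the Hamiltonian (the paper writes $U=e^{iHt}$ and notes $\overline H$, $H\tran$, $H\ot\id$, $\id\ot H$ are quadratic because $c_j\tran=\overline{c_j}=\pm c_j$ and $(P\ot c_j)(P\ot c_k)=\tilde c_{2m+j}\tilde c_{2m+k}$). One small wrinkle: your formula $\overline U=U_D U U_D^{-1}$ is not literal when $\det D=-1$, since then $U_D$ is not a Gaussian unitary in the paper's sense; what you really want (and what your computation actually shows) is that $\overline U$ acts on Majoranas by $DRD\in\SO(2m)$.

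For Part~2 you take a genuinely different route. The paper reduces (as you do) to the diagonal case $H=\sum_j\beta_j N_j$ via the transpose trick, then writes the standard purification explicitly with coefficients $\prod_j\cos(\theta_j)^{1-x_j}\sin(\theta_j)^{x_j}(-1)^{\gamma(x)}$ and proves directly, by an induction (\cref{lem:gaussian diagonal purification}), that this state is obtained from $\ket0^{\ot 2m}$ by the Gaussian rotations $U_k=\exp(-i\theta_k\,(-i\tilde c_{2k}\tilde c_{2m+2k}))$ applied one mode-pair at a time; the sign $\gamma$ emerges from that computation. Your approach instead factors the problem as ``$\ket\Gamma$-with-signs is Gaussian'' plus ``$e^{-H'/2}$ with quadratic $H'$ preserves Gaussianity.'' This is more conceptual and avoids the inductive bookkeeping, but both steps are left as assertions. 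The second one in particular needs care: your parenthetical justification (conjugate the annihilating Hamiltonian by $e^{-H'/2}$) yields a quadratic operator with the right kernel, but it is not Hermitian, so it does not directly witness Gaussianity; the clean way is to pass to linear annihilators $d_j$ and note that $e^{-H'/2}d_je^{H'/2}$ are still linear and still span a Lagrangian subspace for the CAR form. The paper's explicit construction trades this general principle for a self-contained computation.
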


\begin{proof}
\begin{enumerate}
\item
    If $U$ is a Gaussian unitary, then we can write $U = \exp(iHt)$ up to some phase, where $H$ is a quadratic Hamiltonian in the Majorana operators.
    Then $\overline{H}$ and $H\tran$ are also quadratic Hamiltonians in the Majorana operators, since for each $c_j$ we have $c_j\tran = \overline{c_j} = \pm c_j$, so $\overline{U}$ and~$U\tran$ are also Gaussian unitaries.
    Additionally, if $H$ is given as in \cref{eq:quadratic hamiltonian}, then, since~$P^2 = \id$,
    \begin{align*}
        \id \ot H = i\sum_{j,k} h_{jk} \id \ot c_j c_k = i\sum_{j,k} h_{jk} (P \ot c_j)(P \ot c_k) =  i\sum_{j,k} h_{jk} \tilde c_{2m+j} \tilde c_{2m+k}
    \end{align*}
    is a quadratic Hamiltonian, and it is also clear that $H \ot \id$ is a quadratic Hamiltonian. This implies that $U \ot \id$ and $\id \ot U$ are Gaussian unitaries.

\item
    Let $\rho \in \S(\cH_m)$ be an arbitrary Gaussian state.
    By a continuity and compactness argument, we may assume without loss of generality that $\rho$ has the form of \cref{eq:fermionic gibbs state}.
    Then there is a Gaussian unitary~$U$ such that (see \cref{eq:std H}),
    \begin{align*}
        \sigma := U \rho U^\dagger = \frac{e^{- H}}{\tr[e^{- H}]}, \qquad H = \sum_{j=1}^m \beta_j N_j.
    \end{align*}
    By the transpose trick, the standard purifications of $\rho$ and $\sigma$ are related by
    \begin{align*}
        \ket{\psi^\text{std}_\rho} = (U^\dagger \ot U\tran) \ket{\psi^\text{std}_\sigma}
    \end{align*}
    % Then
    % \begin{align*}
    %      \ket{\sqrt{\rho}} &=  (\sqrt{\rho} \ot \id) \Bigl(\sum_{x \in \bit^m} (-1)^{\gamma(x)} \ket{xx} \Bigr)
    %      = (U^\dagger\sqrt{\sigma}U \ot \id) \Bigl(\sum_{x \in \bit^m} (-1)^{\gamma(x)} \ket{xx} \Bigr) \\
    %      &= (U^\dagger\sqrt{\sigma} \ot U\tran) \Bigl(\sum_{x \in \bit^m} (-1)^{\gamma(x)} \ket{xx} \Bigr)
    %      = (U^\dagger \ot U\tran) \ket{\sqrt{\sigma}}
    % \end{align*}
    % using the transpose trick.
    The first part of the Lemma implies that~$U^\dagger \ot U\tran$ is a Gaussian unitary, so it suffices to show that $\ket{\psi^\text{std}_\sigma}$ is a Gaussian state.
    To this end, we write
    \begin{align*}
        \sqrt{\sigma} = \frac{\exp(- H/2)}{\sqrt{\tr[\exp(- H])}} = \prod_{i=1}^m \frac{\exp(- \beta_i N_i / 2)}{\sqrt{1 + e^{- \beta_i }}} .
    \end{align*}
    Noting that
    \begin{align*}
        \exp(- \beta_i N_i / 2) = \proj{0}_i + e^{- \beta_i / 2} \proj{1}_i .
    \end{align*}
    we see that
    \begin{align*}
        \ket{\psi^\text{std}_\sigma} = \sum_{x \in \bit^m} \parens*{ \prod_{j=1}^m \cos(\theta)^{1 - x_j}\sin(\theta)^{x_j} } (-1)^{\gamma(x)} \ket{xx}
    \end{align*}
    for angles $\theta_j \in \RR$ chosen such that
    \begin{align*}
        \cos(\theta_j) = \frac{1}{\sqrt{1 + e^{- \beta_j}}}, \qquad \sin(\theta_j) = \frac{e^{- \beta_j / 2}}{\sqrt{1 + e^{- \beta_j}}} \, .
    \end{align*}
    The result now follows from \cref{lem:gaussian diagonal purification}.
\end{enumerate}
\end{proof}

\begin{lem}\label{lem:gaussian diagonal purification}
    Let $\theta_j \in \RR$ for $j \in [m]$.
    Then
    \begin{align*}
        \ket{\psi} = \sum_{x \in \bit^m} \parens*{ \prod_{j=1}^m \cos(\theta_j)^{1 - x_j}\sin(\theta_j)^{x_j} } (-1)^{\gamma(x)} \ket{xx}
    \end{align*}
    is a pure Gaussian state in $\cH_{2m}^+$.
\end{lem}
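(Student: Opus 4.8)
The plan is to realise $\ket\psi$ as a Gaussian unitary applied to the Fock vacuum of $\cH_{2m}$. Throughout I use the identification $\cH_m\ot\cH'_m\cong\cH_{2m}=\bigwedge\C^{2m}$ and the $2m$-mode fermionic operators fixed just above the lemma; writing $\tilde a_1^\dagger,\dots,\tilde a_{2m}^\dagger$ for the corresponding creation operators (so $\tilde a_j=\tfrac12(\tilde c_{2j-1}+i\tilde c_{2j})$), the computational basis vector $\ket{xx}$ --- with $x$ on the first $m$ and on the last $m$ qubits --- is precisely the Jordan--Wigner basis vector $\tilde a_{j_1}^\dagger\cdots\tilde a_{j_k}^\dagger\,\tilde a_{m+j_1}^\dagger\cdots\tilde a_{m+j_k}^\dagger\ket{\mathbf 0}$, where $\supp(x)=\{j_1<\dots<j_k\}$ and $\ket{\mathbf 0}\coloneqq\ket{0^{2m}}$ is the vacuum.

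The combinatorial heart of the argument is the identity
\[
  \ket\psi \;=\; \littleprod_{j=1}^{m}\bigl(\cos\theta_j\,\id+\sin\theta_j\,\tilde a_j^\dagger\tilde a_{m+j}^\dagger\bigr)\ket{\mathbf 0}.
\]
Each factor $M_j\coloneqq\cos\theta_j\,\id+\sin\theta_j\,\tilde a_j^\dagger\tilde a_{m+j}^\dagger$ is supported on the mode pair $\{j,m+j\}$ and has even fermion parity, so the $M_j$ commute pairwise and the product may be expanded over subsets $S\subseteq[m]$ with no ordering ambiguity. Picking the term $\sin\theta_j\,\tilde a_j^\dagger\tilde a_{m+j}^\dagger$ from each $j\in S=\{j_1<\dots<j_k\}$ (and $\cos\theta_j$ otherwise) produces the operator $(\tilde a_{j_1}^\dagger\tilde a_{m+j_1}^\dagger)\cdots(\tilde a_{j_k}^\dagger\tilde a_{m+j_k}^\dagger)$, and moving it into the standard basis order $\tilde a_{j_1}^\dagger\cdots\tilde a_{j_k}^\dagger\,\tilde a_{m+j_1}^\dagger\cdots\tilde a_{m+j_k}^\dagger$ requires $\binom{k}{2}$ anticommutations of distinct creation operators and hence contributes the sign $(-1)^{\binom{k}{2}}$. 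So the coefficient of $\ket{xx}$ on the right-hand side is $\bigl(\littleprod_j\cos\theta_j^{1-x_j}\sin\theta_j^{x_j}\bigr)(-1)^{\binom{\abs{x}}{2}}$, and a check of residues modulo $4$ gives $(-1)^{\binom{\abs{x}}{2}}=(-1)^{\gamma(x)}$ (both equal $+1$ for $\abs{x}\equiv0,1$ and $-1$ for $\abs{x}\equiv2,3\pmod 4$), matching the definition of $\ket\psi$. I expect this sign bookkeeping to be the only step that requires any care.

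It remains to turn the factors into a genuine Gaussian unitary. Set $G_j\coloneqq\tilde a_j^\dagger\tilde a_{m+j}^\dagger-\tilde a_{m+j}\tilde a_j$; this is anti-Hermitian and quadratic in the Majorana operators, so $\exp(\theta_j G_j)$ is a Gaussian unitary on $\cH_{2m}$, and the $G_j$ commute (even operators on disjoint mode pairs). The operator $G_j$ preserves $\spn\{\ket{\mathbf 0},\tilde a_j^\dagger\tilde a_{m+j}^\dagger\ket{\mathbf 0}\}$, on which it acts as the rotation generator $\left(\begin{smallmatrix}0&-1\\ 1&0\end{smallmatrix}\right)$, whence $\exp(\theta_j G_j)\ket{\mathbf 0}=M_j\ket{\mathbf 0}$; since $\exp(\theta_j G_j)$ is even and supported on $\{j,m+j\}$ it commutes with $M_k$ for $k\ne j$, and a short induction upgrades the displayed identity to $\ket\psi=\littleprod_{j}\exp(\theta_j G_j)\ket{\mathbf 0}=\exp\bigl(\littlesum_j\theta_j G_j\bigr)\ket{\mathbf 0}=:U_R\ket{\mathbf 0}$ with $U_R$ Gaussian. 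Finally $\ket{\mathbf 0}$ is the unique ground state of the quadratic Hamiltonian $\littlesum_{k=1}^{2m}\tilde a_k^\dagger\tilde a_k$ and hence a pure Gaussian state, so $\ket\psi=U_R\ket{\mathbf 0}$ is a pure Gaussian state, and it lies in $\cH_{2m}^+$ since every $\ket{xx}$ has even Hamming weight $2\abs{x}$. (One may instead skip these last steps and verify directly, pair by pair, that the $2m$ operators $b_j=\cos\theta_j\,\tilde a_j-\sin\theta_j\,\tilde a_{m+j}^\dagger$ and $b_{m+j}=\cos\theta_j\,\tilde a_{m+j}+\sin\theta_j\,\tilde a_j^\dagger$ annihilate $\ket\psi$; as they form a Bogoliubov rotation of the $\tilde a_k$, the state $\ket\psi$ is the vacuum of a rotated mode set and therefore a pure Gaussian state.)
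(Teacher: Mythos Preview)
Your proof is correct and follows the same high-level strategy as the paper: exhibit $\ket\psi$ as the vacuum $\ket{0^{2m}}$ acted on by a product of commuting Gaussian unitaries, one for each mode pair $(j,m+j)$. The execution differs in two respects worth noting. First, you work in the second-quantized picture and derive the sign $(-1)^{\gamma(x)}$ conceptually as $(-1)^{\binom{\abs x}{2}}$, coming from the reordering of creation operators; the paper instead stays in the Jordan--Wigner/Pauli picture and verifies the sign recursion by a case distinction on $\tilde x_k$. Second, your generator $G_j=\tilde a_j^\dagger\tilde a_{m+j}^\dagger-\tilde a_{m+j}\tilde a_j=\tfrac12(\tilde c_{2j-1}\tilde c_{2m+2j-1}-\tilde c_{2j}\tilde c_{2m+2j})$ is the natural Bogoliubov pairing generator, whereas the paper uses the single Majorana bilinear $H_k=-i\,\tilde c_{2k}\tilde c_{2m+2k}$; both are quadratic and produce the same two-dimensional rotation on $\spn\{\ket{\mathbf 0},\tilde a_j^\dagger\tilde a_{m+j}^\dagger\ket{\mathbf 0}\}$. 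Your version makes the origin of $\gamma$ more transparent, while the paper's keeps the computation entirely at the qubit level.
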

\begin{proof}
    Let
    \begin{align*}
        \ket{\psi_k} = \sum_{\substack{y \in \bit^k,\\ x = (y,0^{m-k})}} \parens*{ \prod_{j=1}^k \cos(\theta_j)^{1 - x_j}\sin(\theta_j)^{x_j} } (-1)^{\gamma(x)} \ket{xx}.
    \end{align*}
    We will prove by induction that these states are Gaussian for all~$k\in\{0,\dots,m\}$.
    It is clear that $\ket{\psi_0} = \ket{0}^{\ot 2m}$ is Gaussian.
    Now suppose that $\ket{\psi_{k-1}}$ is Gaussian.
    Let $H_k = -i \tilde c_{2k} \tilde c_{2m+2k}$.
    Then
    \begin{align*}
        H_k = -i Y_k Z_{k} Z_{k+1}\dots Z_{m + k - 1} Y_{m+k} = X_k Z_{k+1}\dots Z_{m + k - 1} Y_{m+k}.
    \end{align*}
    We claim that the Gaussian unitary $U_k = \exp(-i H_k \theta_k)$ maps $\ket{\psi_{k-1}}$ to $\ket{\psi_k}$, and hence the latter state is Gaussian as well.
    We write
    \begin{align*}
        U_k \ket{\psi_{k-1}} = \sum_{\substack{y \in \bit^{k-1},\\ x = (y,0^{m-k +1})}} \parens*{ \prod_{j=1}^{k-1} \cos(\theta_j)^{1 - x_j}\sin(\theta_j)^{x_j} } (-1)^{\gamma(x)}\left(\cos(\theta_k) \id - (-1)^{\abs{x}} i \sin(\theta_k) X_k Y_{m+k} \right)\ket{xx}.
    \end{align*}
    We rewrite this as a sum over $\tilde y \in \bit^k$, and $\tilde x = (\tilde y,0^{m-k})$.
    The first term in the right-hand side sum corresponds to~$\tilde y = (y,0)$, and the second one to~$\tilde y = (y,1)$.
    In the former case ew have $\ket{xx} = \ket{\tilde x \tilde x}$, while in the latter case we have $X_k Y_{m+k} \ket{xx} = i \ket{\tilde x \tilde x}$, so we can rewrite this as
    \begin{align*}
        U_k \ket{\psi_{k-1}} = \sum_{\substack{\tilde y \in \bit^{k},\\ \tilde x = (\tilde y,0^{m-k})}} \parens*{ \prod_{j=1}^{k-1} \cos(\theta)^{1 - x_j}\sin(\theta_j)^{x_j} } \alpha(\tilde x) \ket{\tilde x \tilde x}.
    \end{align*}
    where $\alpha(\tilde x) = (-1)^{\gamma(x) + \tilde x_k \abs{x}}$.
    It is easy to verify that $\alpha(\tilde x) = (-1)^{\gamma(\tilde x)}$ by a case distinction on~$\tilde x_k \in \bit$.
    % If $\tilde x_k = 0$ then $\abs{\tilde x} = \abs x$ and hence $\gamma(\tilde x) = \gamma(x)$.
    % If $\tilde x_k = 1$ then $\abs{\tilde x} = \abs x + 1$ and one can easily verify that $\gamma(\tilde x) = \gamma(x) + \abs{x} \pmod 2$.
    We conclude that $U_k \ket{\psi_{k-1}} = \ket{\psi_k}$.
    This concludes the proof, since $\ket{\psi_m} = \ket{\psi}$.
\end{proof}

The final ingredient we need is that the Gaussian states have the right symmetry property.
The following lemma avoid appealing to Howe duality~\cite[Thm.~34]{girardi2025gaussian}.

\begin{lem}\label{lem:gaussian state in algebra}
    If $\sigma$ is a $m$-mode fermionic Gaussian state, then $\sigma^{\ot n} \in \C\{U_R^{\ot n}  :  R \in \SO(2m) \}$.
\end{lem}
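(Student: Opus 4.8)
The plan is to mimic the elementary argument used for the qudit case in footnote~\ref{footnote:rho}. I would set $V := \C\{U_R^{\ot n} : R \in \SO(2m)\} \subseteq \Lin(\cH_m^{\ot n})$; since $\Lin(\cH_m^{\ot n})$ is finite-dimensional, $V$ is a closed complex subspace, so $V = (V^\perp)^\perp$ with respect to the Hilbert--Schmidt inner product. First I would reduce to the Gibbs case: by definition an $m$-mode Gaussian state $\sigma$ is either a Gibbs state $e^{-H}/Z$ of a quadratic Hamiltonian $H$ as in \cref{eq:quadratic hamiltonian}, or a limit of such states, and since $A \mapsto A^{\ot n}$ is continuous and $V$ is closed it suffices to treat the Gibbs case. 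Then $\sigma^{\ot n} = Z^{-n}\,(e^{-H})^{\ot n}$, so it is enough to show $(e^{-H})^{\ot n} \in V$.

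The heart of the proof is an analytic continuation from the unitary family to imaginary time. For every real $t$ the operator $\exp(itH)$ is a Gaussian unitary (this is the statement following \cref{eq:quadratic hamiltonian}), hence $(\exp(itH))^{\ot n} \in V$. I would then consider the map $F \colon \C \to \Lin(\cH_m^{\ot n})$, $F(z) = (\exp(zH))^{\ot n}$, which is entire because $z \mapsto \exp(zH)$ is entire and $A \mapsto A^{\ot n}$ is polynomial. By the previous sentence $F$ sends the imaginary axis into $V$. For any $\phi \in V^\perp$, the scalar function $z \mapsto \tr(\phi^\dagger F(z))$ is entire and vanishes on $i\R$, hence vanishes identically by the identity theorem; as this holds for all $\phi \in V^\perp$ and $V = (V^\perp)^\perp$, we get $F(z) \in V$ for all $z \in \C$. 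Evaluating at $z = -1$ yields $(e^{-H})^{\ot n} \in V$, and hence $\sigma^{\ot n} \in V$.

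The hard part is precisely this analytic-continuation step: a priori $V$ is only the span of Gaussian \emph{unitaries}, and one must see that holomorphy forces it to contain the non-unitary operator $e^{-H}$ as well; everything else (closedness of $V$, that $\exp(itH)$ is among the $U_R$, continuity of tensor powers, reduction from limits of Gibbs states) is routine. I would also remark that $U_R$ is only defined up to a $\pm1$ sign as a projective representation, but this does not affect the span $V$, and along the one-parameter family $t \mapsto \exp(itH)$ there is no ambiguity. An alternative phrasing uses the complexification $\SO(2m)\hookrightarrow\SO(2m,\C)$ --- the representation $R\mapsto U_R^{\ot n}$ extends holomorphically and $\SO(2m)$ is Zariski-dense in $\SO(2m,\C)$ --- but the scalar identity theorem above is the most self-contained route.
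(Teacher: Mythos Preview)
Your proof is correct and follows essentially the same approach as the paper: reduce to Gibbs states by density, observe that $(\exp(itH))^{\ot n}$ lies in the span for all real~$t$, and use analyticity to pass from real to imaginary time. The only cosmetic difference is that the paper first differentiates at~$t=0$ to get $\sum_j H_j \in \cB$ and then re-exponentiates via power series (using that $\cB$, being the span of a group, is closed under products), whereas you invoke the identity theorem directly on the entire function $z \mapsto (\exp(zH))^{\ot n}$ and so never need to note that $\cB$ is an algebra.
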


\begin{proof}
    It suffices to consider the dense set of fermionic Gaussian states of the form~\eqref{eq:fermionic gibbs state}, with Hamiltonian~$H$.
    Let $\cB = \C\{U_R^{\ot n}  : R \in \SO(2m) \}$.
    Then we have $V(t) = \exp(iHt)^{\ot n} \in \cB$, and hence also
    \begin{align*}
         -i\partial_t \big|_{t=0} V(t) = \sum_{j=1}^n H_j \in \cB
    \end{align*}
    where $H_j$ is the operator that acts as $H$ on the $j$-th copy of $\cH_m$ and as $\id$ on the other copies.
    This implies that $\exp(-H) \in \cB$ as well.
\end{proof}

% If we have a Gaussian state $\rho$, then if $\ket{\sqrt{\rho}}$ is the standard Gaussian purification, then for any Gaussian unitary $U$ the state $(\id \ot U)\ket{\sqrt{\rho}}$ is a Gaussian purification of $\rho$. Via the Haar measure on $\SO(2m)$ this gives a measure on the set of (Gaussian) purifications of $\rho$.
We thus obtain the Gaussian random purification result as stated in the introduction.

\begin{proof}[Proof of \cref{cor:fermions}]
This is now an immediate consequence of \cref{thm:main technical}, together with \cref{lem:gaussian state in algebra}, and the fact, proven in \cref{lem:gaussian purification}, that the standard purification is a Gaussian state.
\end{proof}

% \begin{cor}\label{cor:fermions technical}
%     There exists a channel $\cP^{\text{Fermi}}_{n,m} : \Lin(\cH_m) \to \Lin(\cH_{2m})$ such that the following holds: For any Gaussian state $\rho$, we have
%     \begin{align*}
%         \cP^{\text{Fermi}}_{n,m}[\rho^{\ot n}] = \Ex_{\psi_\rho \text{ Gaussian purification of } \sigma } \psi_\sigma^{\ot n}.
%     \end{align*}
% \end{cor}

%-----------------------------------------------------------------------------
\subsection{Tomography and propery testing of pure Gaussian states}\label{sub:gaussian pure tomo}
%-----------------------------------------------------------------------------
We first propose a tomography scheme for pure Gaussian states, analogous to Hayashi's POVM for ordinary pure state tomography~\cite{hayashi1998asymptotic}.
Recall that~$\cG_m^{\pm}$ denotes the set of even resp.\ odd parity pure Gaussian states.
For a number of copies~$n$, let $\cV_{n,m}^\pm$ denote the span of~$\ket{\psi}^{\ot n}$ for~$\ket{\psi} \in \cG_m^{\pm}$.
Recall that $\cV_{1,m}^\pm = \cH_m^\pm$ is an irreducible representation of~$\so(2m)$ with highest weight~$\alpha$ or~$\beta$ (depending on whether~$m$ is even or odd, see \cref{eq:highest weight}) and highest weight vector $\ket{\psi_m^\pm} \in \cG_m^{\pm}$.
It follows that, for every~$n$, $\cV_{n,m}^\pm$ is an irreducible $\so(2m)$-representation with highest weight~$n\alpha$ or~$n\beta$, and highest weight vector~$\ket{\psi_m^{\pm}}^{\ot n}$.

\begin{lem}\label{lem:dimension multicopy fermions}
    We have $\dim \cV_{n,m}^+ = \dim \cV_{n,m}^- = d_{n,m}$, where
    \begin{align}\label{eq:dimension multicopy fermions}
        d_{n,m} = \prod_{1 \leq j < k \leq m} \frac{2m + n - (j+k)}{2m - (j+k)}
    \end{align}
\end{lem}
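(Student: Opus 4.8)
The plan is to read off the answer from the Weyl dimension formula for the Lie algebra $\so(2m)$ (type $D_m$), using the identification---recorded just above---of $\cV_{n,m}^{\pm}$ with the irreducible $\so(2m)$-representation of highest weight $n\alpha$ or $n\beta$ (which of the two occurs depends on the parity of $m$, but this will not matter since both turn out to have the same dimension). Recall that for $D_m$ the positive roots are $\epsilon_i - \epsilon_j$ and $\epsilon_i + \epsilon_j$ for $1 \le i < j \le m$, so that the Weyl vector in standard coordinates is $\rho = (m-1, m-2, \dots, 1, 0)$, and the Weyl dimension formula for a dominant weight $\lambda = (\lambda_1, \dots, \lambda_m)$ reads
\[
    \dim V_\lambda = \prod_{1 \le i < j \le m} \frac{\bigl(\lambda_i - \lambda_j + j - i\bigr)\bigl(\lambda_i + \lambda_j + 2m - i - j\bigr)}{(j-i)(2m - i - j)}.
\]

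First I would plug in $\lambda = n\alpha = (\tfrac n2, \dots, \tfrac n2)$. Since all entries coincide, the factor $\lambda_i - \lambda_j + j - i$ is just $j - i$ and cancels the matching denominator factor, while $\lambda_i + \lambda_j + 2m - i - j = 2m + n - (i+j)$. Relabelling the index pair $(i,j)$ as $(j,k)$ gives exactly
\[
    \dim \cV_{n,m}^+ = \prod_{1 \le j < k \le m} \frac{2m + n - (j+k)}{2m - (j+k)} = d_{n,m},
\]
with the cases $m \le 1$ being the empty product, equal to $1$, as it should be.

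It then remains to handle the highest weight $n\beta = (\tfrac n2, \dots, \tfrac n2, -\tfrac n2)$ and conclude $\dim \cV_{n,m}^- = d_{n,m}$. Here I would split the product according to whether $j < m$ or $j = m$: the pairs with $j < m$ contribute exactly as in the $n\alpha$ case, and for a pair $(i,m)$ one has $\lambda_i - \lambda_m + m - i = n + m - i$ and $\lambda_i + \lambda_m + 2m - i - m = m - i$, the latter cancelling one of the two $(m-i)$ factors in the denominator $(m-i)(m-i)$ and leaving $\tfrac{n+m-i}{m-i}$; collecting terms reproduces $d_{n,m}$ with its product split according to $k < m$ versus $k = m$. (Alternatively, one can avoid this second computation entirely: the outer automorphism of $D_m$ interchanging the two spin nodes---realised concretely by conjugation by any $O \in \O(2m)$ with $\det O = -1$, which sends Gaussian unitaries to Gaussian unitaries and swaps $\cH_m^+$ with $\cH_m^-$---gives a linear isomorphism $\cV_{n,m}^+ \cong \cV_{n,m}^-$.) Either route gives $\dim \cV_{n,m}^- = \dim \cV_{n,m}^+ = d_{n,m}$.

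This is essentially a bookkeeping computation; the only point demanding care is using the correct $D_m$ (rather than $B_m$) root data and correctly disposing of the odd-parity sector, which is why I would prefer to route the latter through the diagram automorphism rather than a second, slightly more delicate, invocation of the Weyl formula.
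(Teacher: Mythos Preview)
Your proposal is correct and follows essentially the same approach as the paper: both apply the Weyl dimension formula for $\so(2m)$ to the highest weights $n\alpha$ and $n\beta$. The paper quotes the formula in the equivalent factored form $\prod_{j<k}(L_j^2-L_k^2)/(K_j^2-K_k^2)$ with $L_j=\lambda_j+m-j$, $K_j=m-j$, which makes the equality $\dim\cV_{n,m}^+=\dim\cV_{n,m}^-$ immediate (only $L_m^2=\lambda_m^2$ is affected by the sign of $\lambda_m$), whereas you handle that case by a separate computation or via the outer automorphism of $D_m$; but this is only a presentational difference.
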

\begin{proof}
This is a consequence of the Weyl character formula.
For $\so(2m)$, it gives the following formula for the dimension of an irreducible representation with highest weight~$\lambda$~\cite[Eq.~(24.41)]{fulton2013representation}:
\begin{align*}
    d_{n,m} = \prod_{1 \leq j < k \leq m} \frac{L_j^2 - L_k^2}{K_j^2 - K_k^2} \qquad K_j = m - j, \ L_j = \lambda_j + m - j.
\end{align*}
Note that $L_m^2 = \lambda_m^2$ does not depend on the sign of $\lambda_m$, and in particular $\dim \cV_+ = \dim \cV_-$.
By substiting~$\lambda = n\alpha$ or~$n\beta$, we obtain \cref{eq:dimension multicopy fermions}.
\end{proof}

Let $\dd \phi$ denote the unique probability measure on~$\cG_m^\pm$ that is invariant under Gaussian unitaries.
Then the operator
\begin{align*}
    \Pi_{n,m}^\pm = d_{n,m} \, \int_{\cG_m^{\pm}} \dd \phi \, \proj{\phi}^{\ot n}
\end{align*}
is easily seen to commute with Gaussian unitaries, hence is an intertwiner for the action of~$\so(2m)$.
By Schur's lemma, $\Pi_{n,m}^\pm$ must be proportional to the projection on $\cV_{n,m}^\pm$.
By taking the trace, we see that~$\Pi_{n,\pm}$ is exactly equal the projection operator on $\cV_{n,m}^\pm$.
This implies that $\mu_{n,\pm}$ defined by
\begin{align}\label{eq:haar measure projection}
    \dd \mu_{n,m}^\pm(\phi) = d_{n,m} \,\dd \phi \, \proj{\phi}^{\ot n},
\end{align}
where $d_{n,m}$ is given in \cref{eq:dimension multicopy fermions}, is a POVM on~$\cV_{n,m}^\pm$ with outcomes in~$\cG_m^\pm$.
Of course, we can combine $\mu_{n,m}^\pm$ into a single POVM~$\mu_{n,m}$ on~$\cV_{n,m}^+ \oplus \cV_{n,m}^-$ with outcomes in~$\cG_m$.

\begin{lem}\label{lem:overlap bound free fermions}
Let $\psi \in \cG_m$ be a pure fermionic Gaussian state.
Let $\hat\psi$ denote the outcome of the measurement of the POVM~$\mu_{n,m}$ given $n$ copies of~$\psi$, i.e., $\psi^{\ot n}$.
Then the expected squared overlap can be bounded by
\begin{align*}
    \Ex \abs{\braket{\psi | \hat{\psi}}}^2 \geq \left(1 - \frac{1}{n} \right)^{m(m-1)/2}.
\end{align*}
\end{lem}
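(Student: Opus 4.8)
The plan is to reduce the expected squared overlap to the dimension ratio $d_{n,m}/d_{n+1,m}$ and then estimate that ratio by an elementary product bound. First I would write the expectation out explicitly. Since the input is $\psi^{\ot n}$ and $\mu_{n,m}$ has POVM element $d_{n,m}\,\dd\phi\,\proj{\phi}^{\ot n}$, the outcome $\hat\psi=\phi$ occurs with probability density $d_{n,m}\abs{\braket{\phi|\psi}}^{2n}$, so
\[
  \Ex\abs{\braket{\psi|\hat\psi}}^2 = d_{n,m}\int_{\cG_m}\dd\phi\,\abs{\braket{\phi|\psi}}^{2(n+1)}.
\]
Here I would note that every pure Gaussian state has definite parity; say $\psi\in\cG_m^+$. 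Because $\cH_m^+\perp\cH_m^-$, we have $\braket{\phi|\psi}=0$ for all $\phi\in\cG_m^-$, so the integral is effectively over $\cG_m^+$ only (and, in passing, the outcome $\hat\psi$ automatically lands in the parity sector of~$\psi$).

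The key step is to recognize the remaining integral as $1/d_{n+1,m}$. This is exactly the computation already used to construct the POVM, but with $n$ replaced by $n+1$: the operator $d_{n+1,m}\int_{\cG_m^+}\dd\phi\,\proj{\phi}^{\ot(n+1)}$ is the orthogonal projection onto $\cV_{n+1,m}^+$ (by Schur's lemma together with the trace normalization, as in the text preceding the lemma), and $\psi^{\ot(n+1)}$ lies in this subspace since $\psi\in\cG_m^+$. Taking the expectation value in the unit vector $\psi^{\ot(n+1)}$ therefore yields $d_{n+1,m}\int_{\cG_m^+}\dd\phi\,\abs{\braket{\phi|\psi}}^{2(n+1)}=1$. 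Hence $\Ex\abs{\braket{\psi|\hat\psi}}^2=d_{n,m}/d_{n+1,m}$, which in particular does not depend on $\psi$.

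It then remains to bound this ratio from below. Using the explicit product in \cref{lem:dimension multicopy fermions},
\[
  \frac{d_{n,m}}{d_{n+1,m}}=\prod_{1\le j<k\le m}\frac{2m+n-(j+k)}{2m+n+1-(j+k)}.
\]
Each of the $m(m-1)/2$ factors has the form $\frac{a}{a+1}=1-\frac{1}{a+1}$ with $a=2m+n-(j+k)\ge n+1$ (since $j+k\le 2m-1$), hence each factor is at least $1-\frac1n$. Multiplying the $m(m-1)/2$ factors gives $\Ex\abs{\braket{\psi|\hat\psi}}^2\ge\left(1-1/n\right)^{m(m-1)/2}$, as claimed.

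I do not anticipate a serious obstacle here. The one point that needs care is the parity bookkeeping—confirming that the POVM outcome stays in the sector containing $\psi$, so that the same $d_{n,m}$ appears as in the single-sector normalization—and the only mildly clever move is identifying the overlap integral with $1/d_{n+1,m}$ through the projection property for $n+1$ copies. The final dimension estimate is routine and I would not expand it further.
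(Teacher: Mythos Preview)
Your proposal is correct and follows essentially the same route as the paper: both reduce the expected overlap to $d_{n,m}/d_{n+1,m}$ by recognizing the integral $\int_{\cG_m^+}\dd\phi\,\abs{\braket{\phi|\psi}}^{2(n+1)}$ as $1/d_{n+1,m}$ via the projection identity for $\Pi_{n+1,m}^+$, and then bound the dimension ratio factor by factor using $j+k\le 2m-1$. The parity bookkeeping you flag is exactly the point the paper handles by restricting to $\cG_m^+$ from the outset.
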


\begin{proof}
    We consider the case where $\ket\psi \in \cG_m^+$ has even parity (the odd case is identical):
    \begin{align*}
        \Ex \abs{\braket{\psi | \hat{\psi}}}^2
        % &= \int_{\phi \in \cG_+} \dd \phi \, \Pr[\hat\psi = \phi] \abs{\braket{\psi | \phi}}^2 \\
        &= \int_{\phi \in \cG_m^+} \tr[ \dd \mu_{n,m}(\phi) \proj{\psi}^{\ot n} ] \ \abs{\braket{\psi|\phi}}^2 \\
        &= d_{n,m} \int_{\phi \in \cG_m^+} \dd\phi \, \tr[ \proj\phi^{\ot n} \proj{\psi}^{\ot n} ] \ \abs{\braket{\psi|\phi}}^2 \\
        &= d_{n,m} \int_{\phi \in \cG_m^+} \dd\phi \, \abs{\braket{\psi|\phi}}^{2n+2} \\
        &= \frac {d_{n,m}} {d_{n+1,m}} \int_{\phi \in \cG_m^+} \tr[ \dd \mu_{n+1,m}(\phi) \proj{\psi}^{\ot (n+1)} ] \\
        &= \frac {d_{n,m}} {d_{n+1,m}} \bra\psi^{\ot (n+1)} \Pi_{n,m}^\pm \ket\psi^{\ot (n+1)} ] \\
        &= \frac {d_{n,m}} {d_{n+1,m}}.
    \end{align*}
    \Cref{lem:dimension multicopy fermions} allows us to compute and bound the ratio of dimensions:
\begin{align*}
    \frac{d_{n,m}}{d_{n+1,m}} &= \prod_{1 \leq i < j \leq m} \frac{2m + n - (i+j)}{2m + n + 1 - (i+j)} = \prod_{1 \leq i < j \leq m} \left(1 - \frac{1}{2m + n + 1 - (i+j)}\right) \\
    &\geq \left(1 - \frac{1}{n} \right)^{m(m-1)/2}.
\end{align*}
\end{proof}

\begin{cor}[Tomography of pure fermionic Gaussian states]\label{cor:pure fermion tomography}
    Let $\eps,\delta>0$.
    Given $n = \bigO(m^2/(\eps\delta))$ copies of a pure fermionic Gaussian state~$\psi \in \cG_m$, the measurement~$\mu_{n,m}$ outputs an estimate~$\hat\psi$ such that~$\abs{\braket{\psi|\hat\psi}}^2 \geq 1 - \eps$, with probability at least~$1-\delta$.
\end{cor}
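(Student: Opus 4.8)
The plan is to combine the expected-overlap bound of \cref{lem:overlap bound free fermions} with a standard Markov-type argument to convert a bound on the \emph{expected} infidelity into a \emph{high-probability} fidelity guarantee. First I would observe that, by \cref{lem:overlap bound free fermions}, for $n$ copies of any pure Gaussian state $\psi\in\cG_m$ the measurement $\mu_{n,m}$ produces an outcome $\hat\psi$ with
\begin{align*}
    \Ex\bracks*{1 - \abs{\braket{\psi|\hat\psi}}^2}
\leq 1 - \parens*{1 - \tfrac1n}^{m(m-1)/2}.
\end{align*}
Next I would bound the right-hand side: using $(1-x)^k \geq 1 - kx$ for $x\in[0,1]$ (Bernoulli's inequality) with $x = 1/n$ and $k = m(m-1)/2$, we get $\Ex[1 - \abs{\braket{\psi|\hat\psi}}^2] \leq \tfrac{m(m-1)}{2n} = \bigO(m^2/n)$.

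Then I would apply Markov's inequality to the nonnegative random variable $1 - \abs{\braket{\psi|\hat\psi}}^2$: for any $\delta>0$,
\begin{align*}
    \Pr\bracks*{1 - \abs{\braket{\psi|\hat\psi}}^2 > \eps}
\leq \frac{\Ex\bracks*{1 - \abs{\braket{\psi|\hat\psi}}^2}}{\eps}
\leq \frac{m(m-1)}{2n\eps}.
\end{align*}
Setting this at most $\delta$ gives the condition $n \geq \tfrac{m(m-1)}{2\eps\delta} = \bigO(m^2/(\eps\delta))$, and then $\abs{\braket{\psi|\hat\psi}}^2 \geq 1 - \eps$ holds with probability at least $1 - \delta$, which is exactly the claim. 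So the protocol is: measure $\mu_{n,m}$ on $\psi^{\ot n}$ and output the outcome $\hat\psi$.

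I do not expect any serious obstacle here — the result is essentially a two-line corollary of \cref{lem:overlap bound free fermions} plus Markov. The only points requiring a little care are (i) confirming that $\mu_{n,m}$ is genuinely a POVM with outcomes in $\cG_m$ (already established in the paragraph preceding \cref{lem:overlap bound free fermions}, where $\Pi_{n,m}^\pm$ is shown to be the projection onto $\cV_{n,m}^\pm$ and $\psi^{\ot n}$ lies in $\cV_{n,m}^+\oplus\cV_{n,m}^-$ since $\psi$ has definite parity), and (ii) making sure the elementary inequality chain $(1-1/n)^{m(m-1)/2}\geq 1 - m(m-1)/(2n)$ is valid, which holds for all $n\geq 1$. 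One could alternatively absorb constants and simply state $n = \bigO(m^2/(\eps\delta))$ suffices without optimizing the constant factor.
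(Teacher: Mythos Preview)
Your proof is correct and follows essentially the same approach as the paper: apply Markov's inequality to the nonnegative random variable $1 - \abs{\braket{\psi|\hat\psi}}^2$, using the expected-overlap bound from \cref{lem:overlap bound free fermions} to obtain $\Pr\bigl(\abs{\braket{\psi|\hat\psi}}^2 < 1-\eps\bigr) \leq \bigO(m^2/(n\eps))$, and then choose $n = \bigO(m^2/(\eps\delta))$. Your explicit use of Bernoulli's inequality to get the constant $m(m-1)/(2n)$ is a small refinement over the paper's terse $\bigO(m^2/n)$, but the argument is otherwise identical.
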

\begin{proof}
    By Markov's inequality and \cref{lem:overlap bound free fermions},
    \begin{align*}
        \Pr\mleft( \abs{\braket{\psi|\hat\psi}}^2 \geq 1 - \eps \mright)
    \leq \frac{1 - \Ex \abs{\braket{\psi|\hat\psi}}^2}\eps
    \leq O\mleft(\frac{m^2}{n\eps}\mright).
    \end{align*}
    Thus, $n = \bigO(m^2/(\eps\delta))$ copies suffice to make the right-hand side~$\leq\delta$.
\end{proof}

As discussed in the introduction, this result improves the sample complexity of pure fermionic Gaussian tomography to~$\bigO(m^2)$, which is optimal.
As a direct corollary, we can also find an improved algorithm for pure fermionic \emph{Gaussianity testing}---the property testing problem of deciding if an unknown fermionic pure state is Gaussian or far from it, given a number of copies of the state.

% \subsubsection{Property testing fermionic Gaussian states}
% Consider the following problem. Let $\ket{\psi}\in \cH_m$ be an unknow state that is either a Gaussian state, or at least $\eps$-far away from any Gaussian state (with respect to some choice of distance measure). The goal of \emph{Gaussianity testing} is to decide, with constant probability of error, which of the two is the case, given a number of copies of the state.
% The tomography result in \cref{cor:pure fermion tomography} yields a bound on the sample complexity of this task.

% \begin{cor}[Property testing fermionic Gaussian pure states]\label{cor:testing fermions}
%     The sample complexity of pure Gaussianity testing on $m$ fermionic modes is at most~$n = \bigO(m^2 / \eps)$.
%     That is, there is an algorithm that, using $n$ copies of a pure state~$\psi$ can distinguish with constant probability of success whether $\psi \in \cG_m$ or $\max_{\phi \in \cG_m} \abs{\braket{\psi|\phi}}^2 \leq 1 - \eps$ (assuming one of the two is the case).
% \end{cor}

\begin{proof}[Proof of \cref{cor:testing fermions}]
    Consider the following protocol:
    \begin{enumerate}[noitemsep]
        \item First use $n_1 = \bigO(m^2 / \eps)$ copies for tomography using the measurement $\mu_{n_1,m}$ in \cref{cor:pure fermion tomography} to obtain an estimate $\hat\psi$ such that, if $\ket{\psi}$ is Gaussian, then $\abs{\braket{\psi | \hat{\psi}}}^2 \geq 1 - \eps/2$ holds with constant probability of success.
        \item Then prepare $n_2 = \Theta(1/\eps)$ copies of $\hat\psi$ and use $n_2$ additional copies of $\ket{\psi}$ to perform $n_2$ swap tests. Accept if all swap tests accept.
    \end{enumerate}
    We now analyze the correctness of this algorithm.
    If $\psi$ is Gaussian, then with constant probability, $\hat{\psi}$ has overlap~$\abs{\braket{\psi | \hat{\psi}}}^2 \geq 1 - \eps/2$.
    If, on the other hand, the state $\psi$ is $\eps$-far from Gaussian, then (since $\hat\psi$ is Gaussian) we have $\abs{\braket{\psi | \hat{\psi}}}^2 \leq 1 - \eps$.
    When choosing $n_2 = \Theta(1/\eps)$ appropriately, the swap test distinguishes these two cases with constant probability of error.
\end{proof}

In \cref{cor:testing fermions} we consider the standard ``one-sided'' property testing scenario where the given state is exactly Gaussian in the accepting case.
The previous best result is given in~\cite{bittel2025optimalfermion}.
To compare, we take $\eps_A = 0$ and $\eps = \eps_B^2$ in the notation of their work, which gives a sample complexity of $n = \bigO(m^5/\eps^2)$ in \cite[Thm.~13]{bittel2025optimalfermion}.

While our sample complexity is significantly improved, it is unlikely to be optimal.
Indeed, it does not only test whether a state is Gaussian, but also learns a description of the state if the state is Gaussian.
Since $\cV_{n,m}^\pm$ is the span of $\ket{\psi}^{\ot n}$ for Gaussian states~$\psi \in \cG_m^\pm$, the projection $\Pi_{n,m} = \Pi_{n,m}^+ + \Pi_{n,m}^-$ is the minimal POVM element on $\cH_m^{\ot n}$ that accepts $\psi^{\ot n}$ for any pure fermionic Gaussian state.
Accordingly, the corresponding projective measurement is the optimal test on $n$ copies that is perfectly complete.
We conjecture that with this measurement, pure Gaussianity testing for fermions can be done using a number of copies independent of~$m$, similar to the case of stabilizer states~\cite{gross2021schur}.
See also \cite{girardi2025gaussian} for recent progress in the bosonic setting.

%-----------------------------------------------------------------------------
\subsection{Tomography of mixed Gaussian states}
%-----------------------------------------------------------------------------
Finally, we can combine the random Gaussian purification channel (\cref{sub:gaussian purification}) with our result on tomography result for Gaussian pure states (\cref{sub:gaussian pure tomo}) to obtain a tomography algorithm for Gaussian mixed states with quadratic sample complexty.

\begin{proof}[Proof of \cref{thm:tomo}]
The tomography algorithm proceeds as follows:
\begin{enumerate}[noitemsep]
\item First apply the random purification channel $\cP^{\text{Fermi}}_{n,m}$ of \cref{cor:fermions} to $\sigma^{\ot n}$
\item Then perform the pure-state tomography protocol of \cref{cor:pure fermion tomography} on the output, to obtain an estimate $\hat{\psi}$ of a $2m$-mode pure fermionic Gaussian state.
\item Return $\hat\sigma$, the partial trace of $\hat\psi$ over the purifying system.
\end{enumerate}
The analysis is completely straightforward.
By \cref{cor:fermions}, $\cP^{\text{Fermi}}_{n,m}$ has the interpretation of preparing $n$ copies of a random Gaussian purification of $\sigma$.
By \cref{cor:pure fermion tomography}, it follows that the state $\psi$ satisfies $\abs{\braket{\psi | \hat \psi}}^2 \geq 1 - \eps$ for \emph{some} Gaussian purification~$\ket{\psi}$ of~$\sigma$, with constant probability.
By taking a partial trace we get obtain an estimate~$\hat\sigma$ such that $F(\sigma,\hat \sigma)^2 \geq \abs{\braket{\psi | \hat \psi}}^2 \geq 1 - \eps$.
\end{proof}

For arbitrary (full-rank) states on $\C^d$, the purification has $O(d^2)$ degrees of freedom.
In that case, tomography for pure states has a sample complexity scaling with $d$, while for mixed states it scales with $d^2$.
In the fermionic case, the Hilbert space purifying an $m$-mode fermionic system has~$2m$ modes.
This again squares the Hilbert space dimension, but since an $m$-mode fermionic state state is specified by~$O(m^2)$ parameters, it is natural to find that fermionic Gaussian tomography has a sample complexity scaling with $m^2$ for both pure and mixed states.

%-----------------------------------------------------------------------------
\subsection{Sample complexity lower bound}
%-----------------------------------------------------------------------------
We now show that $\bigO(m^2/\eps)$ is in fact the \emph{optimal} sample complexity.
Since we found the same scaling of the sample complexity for pure and mixed states, it suffices to prove a lower bound for estimating pure states, which simplifies matters significantly.
We will follow the argument in~\cite{scharnhorst2025optimal} (see also \cite{harrow2013church}) for lower bounding the sample complexity of (pure) state tomography for arbitrary states, which only requires minor modifications.

Assume that we have an algorithm that takes $n$ copies of an even-parity pure Gaussian fermionic state $\psi \in \cG_m^+$, and returns an estimate $\ket{\hat{\psi}} \in \cH_m^+$.
We can assume without loss of generality that the estimate is itself a pure even parity Gaussian fermionic state.
This corresponds to some POVM~$\mu$, and we get a probability distribution $P_{\psi}$ for the measurement outcomes~given~by
\begin{align*}
    \dd P_{\psi}(\phi) = \tr[\dd \mu(\phi) \proj{\psi}^{\ot n}]
\end{align*}
when applied to $n$ copies of the state~$\psi$.
Given such a measurement, we now bound its average performance when the target state is chosen uniformly at random.

\begin{lem}\label{lem:moments fidelity}
    Suppose that $\psi$ is distributed according to the Haar measure on $\cG_m^+$ or on $\cG_m^-$?, and $\hat{\psi}$ denotes the estimate from the measurement~$\mu$ on $n$ copies of $\psi$.
    Then,
    \begin{align*}
        \Ex_{\psi} \, \Ex_{\hat\psi \sim P_{\psi}} \, \abs{\braket{\psi | \hat\psi}}^{2k} \leq \frac{d_{n,m}}{d_{n+k,m}},
    \end{align*}
    with $d_{n,m}$ given in \cref{eq:dimension multicopy fermions}.
\end{lem}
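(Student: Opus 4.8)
The plan is to imitate the standard "Haar moment" computation used for pure-state tomography lower bounds, adapted to the Gaussian setting where the relevant representation is the irreducible $\so(2m)$-representation $\cV_{n,m}^+$ rather than the full symmetric subspace. First I would note that since $\hat\psi$ is a pure even-parity Gaussian state, it suffices to bound $\Ex_\psi \Ex_{\hat\psi \sim P_\psi} \abs{\braket{\psi|\hat\psi}}^{2k}$ by fixing the measurement outcome $\phi$ and analyzing $\Ex_{\psi} \abs{\braket{\psi|\phi}}^{2k}$, where $\psi$ is Haar-random on $\cG_m^+$. The key observation is that this last expectation is \emph{independent of} $\phi \in \cG_m^+$: by Gaussian-unitary invariance of the measure $\dd\phi$, for any fixed Gaussian $\phi$ we have $\Ex_\psi \abs{\braket{\psi|\phi}}^{2k} = \Ex_\psi \abs{\braket{\psi|\psi_m^+}}^{2k}$, a single fixed number I will call $c_{k}$.

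Next I would compute $c_k$ by the same projector trick already used in the proof of \cref{lem:overlap bound free fermions}. Namely,
\begin{align*}
    c_k = \int_{\cG_m^+} \dd\psi\, \abs{\braket{\psi|\phi}}^{2k}
    = \frac{1}{d_{k,m}} \int_{\cG_m^+} \tr\bracks*{\dd\mu_{k,m}^+(\psi)\, \proj{\phi}^{\ot k}}
    = \frac{1}{d_{k,m}} \bra{\phi}^{\ot k} \Pi_{k,m}^+ \ket{\phi}^{\ot k} = \frac{1}{d_{k,m}},
\end{align*}
using that $\Pi_{k,m}^+$ is the projection onto $\cV_{k,m}^+$ and that $\ket\phi^{\ot k} \in \cV_{k,m}^+$. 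More generally, for $n$ copies I would rather directly compute $\Ex_\psi \abs{\braket{\psi|\phi}}^{2(n+k)}$-type quantities; the cleanest route is: since $\dd P_\psi$ is a probability measure and $\psi \mapsto \psi^{\ot n}$ lands in $\cV_{n,m}^+$, we have $\Ex_{\hat\psi} \abs{\braket{\psi|\hat\psi}}^{2k} = \tr\bracks{M_\psi \proj\psi^{\ot n}}$ for a suitable POVM-averaged operator, and then averaging over Haar-random $\psi$ and using $\int \proj\psi^{\ot(n+k)}\,\dd\psi = \Pi_{n+k,m}^+/d_{n+k,m}$ collapses everything. The ratio $d_{n,m}/d_{n+k,m}$ then appears because the normalization of the $n$-copy measure relative to the $(n+k)$-copy measure involves exactly $d_{n,m}$ versus $d_{n+k,m}$, just as in \cref{lem:overlap bound free fermions} with $k=1$.

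Concretely, the argument I would write: fix the outcome $\phi$ and observe $\abs{\braket{\psi|\phi}}^{2k} = \tr\bracks{\proj\phi^{\ot k}\proj\psi^{\ot k}}$, so
\begin{align*}
    \Ex_\psi \Ex_{\hat\psi \sim P_\psi} \abs{\braket{\psi|\hat\psi}}^{2k}
    = \int \dd\mu(\phi) \int \dd\psi\; \tr\bracks*{\proj\psi^{\ot n}}\; \tr\bracks*{\proj\phi^{\ot k}\proj\psi^{\ot k}},
\end{align*}
wait---more carefully, $\Ex_\psi \int \tr[\dd\mu(\phi)\proj\psi^{\ot n}]\abs{\braket{\psi|\phi}}^{2k} = \int \tr\bracks*{\dd\mu(\phi)\, \Ex_\psi\bracks{\proj\psi^{\ot(n+k)}}\,(I^{\ot n}\ot\proj\phi^{\ot k})}$ after pulling $\proj\phi^{\ot k}$ outside the $\psi$-average. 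Then $\Ex_\psi \proj\psi^{\ot(n+k)} = \Pi_{n+k,m}^+/d_{n+k,m}$, and since $\Pi_{n+k,m}^+ (I^{\ot n}\ot\proj\phi^{\ot k})$ has the $(n+k)$-fold symmetric structure that lets us bound $\tr[\,\cdot\,] \le \bra\phi^{\ot k}(\text{reduced operator})\ket\phi^{\ot k}$, the whole thing is at most $\frac{1}{d_{n+k,m}}\int \tr[\dd\mu(\phi)\, \rho_\phi]$ for an operator $\rho_\phi \le d_{n,m}\Pi_{n,m}^+$; using that $\int\dd\mu(\phi) = I$ on $\cV_{n,m}^+$ gives the claimed bound $d_{n,m}/d_{n+k,m}$. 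The main obstacle is making this contraction step precise: one needs the inequality $\tr_{\text{last }k}\bracks{\Pi_{n+k,m}^+\,(I^{\ot n}\ot\proj\phi^{\ot k})} \le \frac{d_{n,m}}{d_{n+k,m}}\,d_{n+k,m}\cdot\frac{1}{d_{n,m}}\cdot\dots$—i.e.\ an operator inequality relating the partial trace of the $(n+k)$-copy Gaussian projector against $\proj\phi^{\ot k}$ to the $n$-copy Gaussian projector $\Pi_{n,m}^+$, which should follow from $\Pi_{n+k,m}^+ \le \Pi_{n,m}^+ \ot I^{\ot k}$ (nesting of symmetric-type subspaces: an $n+k$-fold product of identical Gaussian states is in particular an $n$-fold such product tensored with $k$ more copies) together with $\tr_{\text{last }k}[(\Pi_{n,m}^+\ot I^{\ot k})(I^{\ot n}\ot\proj\phi^{\ot k})] = \Pi_{n,m}^+$. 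Verifying $\cV_{n+k,m}^+ \subseteq \cV_{n,m}^+ \ot (\cH_m^+)^{\ot k}$ is the one genuinely Gaussian-specific input and deserves an explicit sentence, but it is immediate from the definition of $\cV_{\bullet,m}^+$ as the span of $\ket\psi^{\ot\bullet}$ over $\psi\in\cG_m^+$.
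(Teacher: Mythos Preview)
Your approach is essentially the same as the paper's: rewrite $\abs{\braket{\psi|\hat\psi}}^{2k}$ as $\tr\bigl[\proj{\hat\psi}^{\ot k}\proj\psi^{\ot k}\bigr]$, combine with the $n$-copy POVM to get $\tr\bigl[(\dd\mu(\hat\psi)\ot\proj{\hat\psi}^{\ot k})\proj\psi^{\ot(n+k)}\bigr]$, then average over Haar $\psi$ using $\Ex_\psi\proj\psi^{\ot(n+k)}=\Pi_{n+k,m}^+/d_{n+k,m}$. The only difference is in the final bound: what you call ``the main obstacle'' --- the nesting inequality $\Pi_{n+k,m}^+\le\Pi_{n,m}^+\ot I^{\ot k}$ --- the paper avoids entirely by using the trivial bound $\Pi_{n+k,m}^+\le I$ together with the observation that one may assume $\mu$ is supported on~$\cV_{n,m}^+$, which immediately gives $\int\tr\bigl[(\dd\mu(\hat\psi)\ot\proj{\hat\psi}^{\ot k})\,I\bigr]=\tr[\Pi_{n,m}^+]=d_{n,m}$. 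Your nesting argument is correct (and has the mild advantage of not needing the support assumption), but the paper's route is shorter and sidesteps the partial-trace operator inequality you were worried about.
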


\begin{proof}
    We only consider the even-parity case since the analysis of the odd-parity case proceeds in the same way.
    We compute
    \begin{align*}
        \Ex_{\psi} \, \Ex_{\hat\psi \sim P_{\psi}} \, \abs{\braket{\psi | \hat\psi}}^{2k}
        &= \int_{\cG_m^+} \int_{\cG_m^+} \tr\bigl[ \dd \mu(\hat\psi) \proj{\psi}^{\ot n} \bigr] \, \abs{\braket{\psi | \hat\psi}}^{2k} \dd \psi \\
        &= \int_{\cG_m^+} \int_{\cG_m^+} \tr\bigl[ (\dd \mu(\hat\psi) \ot \proj{\hat{\psi}}^{\ot k}) \proj{\psi}^{\ot(n+k)} \bigr] \dd \psi \\
        &= \frac{1}{d_{n+k,m}} \int_{\cG_m^+} \tr\bigl[ (\dd \mu(\hat\psi) \ot \proj{\hat{\psi}}^{\ot k}) \Pi_{n+k,m}^+ \bigr],
    \end{align*}
    where in the last equality we use \cref{eq:haar measure projection}.
    The result now follows from
    \begin{align*}
        \int_{\cG_m^+} \tr[(\dd \mu(\hat\psi) \ot \proj{\hat{\psi}}^{\ot k}) \Pi_{n+k,m}^+] &\leq \tr[\Pi_{n,m}^+ \ot \proj{\hat{\psi}}^{\ot k})]
        = \tr[\Pi_{n,m}^+] = d_{n,m},
    \end{align*}
    where we use that $\Pi_{n+k,m}^+ \leq \id$ and we may assume without loss of generality that $\mu$ is supported on $\cV_{n,m}^+$, the space spanned by $n$~identical copies of even parity Gaussian fermionic states.
\end{proof}

We now use this to bound the required number of copies for tomography of Gaussian fermionic states.
We now state and prove a more precise version of \cref{thm:tomo lower bound}.

% \begin{thm}
%     Let $\mu$ be an $n$-copy measurement. Suppose that for any $\ket{\psi} \in \cG_m^+$, the resulting estimate $\ket{\hat{\psi}}$ satisfies $\abs{\braket{\psi|\hat\psi}}^2 \geq 1 - \eps$ with probability at least 0.99. Then $n = \Omega(m^2/\eps)$.
% \end{thm}

\begin{proof}[Proof of \cref{thm:tomo lower bound}]
It suffices to restrict to even parity states.
    Let $\mu$ be an $n$-copy measurement, for which we assume that for any $\psi \in \cG_m^+$, the resulting estimate $\hat{\psi}$ satisfies $\abs{\braket{\psi|\hat\psi}}^2 \geq 1 - \eps$ with probability at least 0.99.
        Together with the fact that $\abs{\braket{\psi | \hat\psi}}^{2k} \geq 0$, we have that for every $\psi \in \cG_m^+$,
    \begin{align*}
        \Ex_{\hat\psi \sim P_{\psi}} \, \abs{\braket{\psi | \hat\psi}}^{2k} \geq  0.99(1-\eps)^{2k} .
    \end{align*}
    On the other hand, by \cref{lem:moments fidelity} and \cref{lem:dimension multicopy fermions} we have
    \begin{align*}
        \Ex_{\psi} \, \Ex_{\hat\psi \sim P_{\psi}} \, \abs{\braket{\psi | \hat\psi}}^{2k} &\leq \frac{d_{n,m}}{d_{n+k,m}}
        = \prod_{1 \leq i < j \leq m} \frac{2m + n - (i+j)}{2m + n + k - (i+j)}\\
        &= \prod_{1 \leq i < j \leq m} \left(1 - \frac{k}{2m + n + k - (i+j)}\right)
        \leq \left(1 - \frac{k}{2m + n + k}\right)^{m(m-1)/2}.
    \end{align*}
    We conclude that for every $k$
    \begin{align*}
        0.99(1-\eps)^{2k} \leq \Ex_{\psi} \, \Ex_{\hat\psi \sim P_{\psi}} \, \abs{\braket{\psi | \hat\psi}}^{2k} \leq \left(1 - \frac{k}{2m + n + k}\right)^{m(m-1)/2}.
    \end{align*}
    Using $1 + xy \leq (1+x)^y \leq e^{xy}$, which holds for $x \geq -1$, we deduce that
    \begin{align*}
        0.99(1-2k\eps) \leq \exp\mleft( -\frac{km(m-1)}{2(2m + n + k)}\mright).
    \end{align*}
    We now choose $k = \lfloor 1/(4\eps) \rfloor$, so $0.99(1-2k\eps) \geq e^{-1/2}$ and taking logarithms we get
    \begin{align*}
        \frac{km(m-1)}{2(2m + n + k)} \leq \frac12,
    \end{align*}
    which we can rewrite as
    \begin{equation*}
        n \geq k(m^2 - m - 1) - 2m = \Omega(m^2/\eps).
        \qedhere
    \end{equation*}
\end{proof}

%=============================================================================
\section{Random purification for bosonic Gaussian states}\label{sec:bosons}
%=============================================================================
In this section we explain how to use \cref{thm:main technical} to obtain a random purification theorem for gauge-invariant bosonic Gaussian states.

%-----------------------------------------------------------------------------
\subsection{Gaussian states and unitaries}\label{sub:boson prelims}
%-----------------------------------------------------------------------------
We start by defining these states and related concepts, see e.g.\ \cite{weedbrook2012gaussian}.

\paragraph{Bosonic Hilbert space and operators:}
If we have $m$ bosonic modes, the associated Hilbert space is given by
\begin{align*}
    \cH_m = \mathrm \L^2(\R^m).
\end{align*}
There is a natural algebra on $\cH_m$ generated by the bosonic creation and annihilation operators, denoted $a_j^\dagger$ and~$a_j$ for $j \in [m]$, which satisfy the canonical commutation relations
\begin{align*}
    [a_j, a_k^\dagger] = \delta_{jk} \id, \qquad [a_j^\dagger, a_k^\dagger] = [a_j, a_k] = 0.
\end{align*}
The associated quadrature operators are the position and momentum operators, given by~$q_j = (a_j^\dagger + a_j)/\sqrt2$ and~$p_j = i(a_j^\dagger - a_j)/\sqrt2$ for~$j\in[m]$.
We also define the (total) number operator by~$N = \sum_j a_j^\dagger a_j$.
The associated unitaries~$e^{i\theta N}$ are called gauge transformations; it holds that $e^{i\theta N} a_j e^{-i\theta N} = e^{-i\theta} a_j$ for~$j\in[m]$.
The parity operator is defined as~$P = (-1)^N$.
It anticommutes with all creation and annihilation operators.
The Hilbert space~$\cH_m$ splits into the~$\pm 1$ eigenspaces of~$P$, called the even and odd parity subspaces~$\cH_m^+$ and~$\cH_m^-$.

\paragraph{Gaussian unitaries:}
For $S \in \Sp(2m)$, there are unitaries $U_S$ that map
\begin{align*}
    U_S r_j U_S^\dagger = \sum_{k=1}^{2m} S_{kj} r_k,
\end{align*}
where $(r_1,\dots,r_{2m}) = (q_1,p_1,\dots,q_m,p_m)$ are the quadrature operators.
This defines a projective representation\footnote{The projective representation can be lifted to a genuine representation of the metaplectic group~$\Mp(2m)$, the double cover of $\Sp(2m)$.} of $\Sp(2m)$ on~$\cH_m$.
The subspaces are $\cH_m^+$ and $\cH_m^-$ are irreducible subrepresentations.
A unitary is called a Gaussian unitary if it is a product of a unitary as in~\eqref{eq:gaussian U} with an element of the Heisenberg-Weyl group (that is, a unitary generated by a displacement operator).
Thus, the Gaussian unitaries form a projective representation of the affine symplectic group~$\ASp(2m)$.

A quadratic Hamiltonian is a Hamiltonian~$H$ that is at most quadratic in the creation and annihilation operators (equivalently, in the quadrature operators).
Then $U_S = \exp(i H t)$ is a Gaussian unitary, and all Gaussian unitaries can be written in this way (up to an irrelevant phase).
We call $H$ gauge-invariant (or number-preserving) if it commutes with all gauge transformations (equivalently, with the number operator).
Such Hamiltonians are of the form
\begin{align}\label{eq:quadratic gauge invariant hamiltonian}
    H = \sum_{j,k=1}^m h_{jk} a_j^\dagger a_k
\end{align}
where $h = h^\dagger$ (so $ih$ is in the Lie algebra~$\mathfrak{u}(m))$, up to an irrelevant additive constant.
The corresponding Gaussian unitaries are called passive (or gauge-invariant or number-preserving) Gaussian unitaries.
They form a projective representation of the unitary group~$\U(m)$, which is the maximally compact subgroup of~$\Sp(2m)$.
For any number-preserving Hamiltonian~\eqref{eq:quadratic gauge invariant hamiltonian}, there exists a passive Gaussian unitary transforming it into
\begin{align}\label{eq:diag gauge invariant hamiltonian}
    H = \sum_{j=1}^m \beta_j a_j^\dagger a_j + c
\end{align}
for $\beta_j \in \R$ and $c \in \R$.

% Note that $\Sp(2m) \cap \O(2m) = \U(m)$.

\paragraph{Gaussian states:}
A Gaussian state is a Gibbs state of a quadratic Hamiltonian that is bounded from below,
\begin{align}\label{eq:bosonic gibbs state}
    \rho = \frac1Z e^{-H}, \quad Z = \tr[e^{-H}],
\end{align}
or a limit thereof.
There always exists a Gaussian unitary that transforms the Hamiltonian~$H$ into a sum of harmonic oscillators
\begin{align}\label{eq:std H boson}
    H = \sum_{j=1}^m \beta_j \left( a_j^\dagger a_j + \frac12 \right)
\end{align}
where $\beta_j > 0$ (up to an irrelevant additive constant).

A gauge-invariant (or number-preserving) Gaussian state is a Gaussian state that commutes with all gauge transformations (equivalently, with the number operator).
In this case, $H$ must be gauge-invariant and it can be brought into the form \cref{eq:std H boson} by a passive Gaussian unitary.
By comparing
\cref{eq:diag gauge invariant hamiltonian,eq:std H boson}, we see that any gauge-invariant Gaussian state~$\rho$ is in the von Neumann algebra generated by the passive Gaussian unitaries:
$\rho \in \{U_O : O \in \U(m)\}''$.

If we use the basis of occupation number states $\ket{x} = \ket{x_1,\dots,x_m}$ for $\L^2(\R^m)$ to define the standard purification, then we note that
\begin{align*}
    \ket{\psi^\text{std}_\rho} = (\sqrt{\rho} \ot \id) \sum_{x} \ket{xx}
\end{align*}
is a Gaussian state.
To see this, note that by the transpose trick (and the fact that $U_S^{\tran}$ is Gaussian if $U_S$ is Gaussian), we may assume without loss of generality that $\rho$ is the Gibbs state of a Hamiltonian as in \cref{eq:std H boson}, so $\rho$ is a tensor product of $m$ bosonic modes, and the standard purification is a tensor product of states proportional to
\begin{align*}
    \sum_{x_j=0}^{\infty} e^{-\beta_j x_j/2} \ket{x_j x_j},
\end{align*}
where $\ket{x}$ are the occupation number states, and this is a Gaussian state.

%-----------------------------------------------------------------------------
\subsection{Random purification of gauge-invariant Gaussian states}
%-----------------------------------------------------------------------------
We now prove \cref{cor:bosons}, which asserts the existence of a random purification channel for gauge-invariant bosonic Gaussian states.
One approach would be to formulate an appropriate infinite-dimensional variant of \cref{thm:main simplified,thm:main technical}.
However, we find that this is not necessary; \cref{cor:bosons} can be obtained as a direct consequence of the finite-dimensional result.

\begin{proof}[Proof of \cref{cor:bosons}]
    We decompose the $m$-mode Hilbert space into the eigenspaces of the number operator:
    \begin{align}\label{eq:particle number decomposition}
        \L^2(\R^m) \cong \bigoplus_{k=0}^\infty \cH_k, \qquad \cH_k = \Sym^k(\C^m).
    \end{align}
    The subspaces $\cH_k$ are invariant and indeed irreducible under passive Gaussian unitaries.
    Indeed, a passive Gaussian unitary~$U_O$ for $O \in \U(m)$ acts by~$O^{\ot k}$ on~$\cH_k$ (up to a phase independent of~$k$).
    As a basis (for defining the standard purification) we choose the occupation (or weight) basis, which is compatible with this decomposition.
    We now write
    \begin{align}\label{eq:particle number multicopy}
        \L^2(\R^m)^{\ot n} \cong \bigoplus_{k_1, \dots, k_n = 0}^\infty \cH_{k_1} \ot \cdots \ot \cH_{k_n} = \bigoplus_{k=0}^n \cH_k^{(n)}, \qquad
        \cH_k^{(n)} = \bigoplus_{k_1 + \dots + k_n = k} \cH_{k_1} \ot \cdots \ot \cH_{k_n},
    \end{align}
    and apply \cref{thm:main simplified} to each $\cH_k^{(n)}$ separately, with $G = G_k^{(n)} = \{ U_O^{\ot n}|_{\cH_k^{(n)}} : O \in \U(m) \}$, to obtain random purification channels~$\cP_k^{(n)} \colon \Lin(\cH_k^{(n)}) \to \Lin(\cH_k^{(n)} \ot \cH_k^{(n)})$ with the property that
    \begin{align}\label{eq:cPkn}
        \cP_k^{(n)}[\rho_k^{(n)}] = \int_{O \in \U(m)} \left( \id \ot U_O^{\ot n} \right) \psi^\text{std}_{\rho_k^{(n)}} \left( \id \ot (U_O^\dagger)^{\ot n} \right) \dd O
    \end{align}
    for all states $\rho_k^{(n)} \in \C G^{(n)}_k$.

    We can then define a gauge-invariant bosonic Gaussian purification channel as follows:
    \begin{align*}
        \cP_{n,m}^\text{Boson} = \bigoplus_{k=0}^{\infty} \cP_k^{(n)}, \quad\text{that is,}\quad
        \cP_{n,m}^\text{Boson}[\rho] = \bigoplus_{k=0}^{\infty} \cP_k^{(n)}\bigl[ P_k^{(n)} \rho P_k^{(n)} \bigr]
    \end{align*}
    for every state~$\rho$ on~$\L^2(\R^m)^{\ot n}$, where $P_k^{(n)}$ denotes the orthogonal projection onto $\cH_k^{(n)} \subseteq \L^2(\R^m)^{\ot n}$.
    The output can be interpreted as density operator on~$\L^2(\R^m)^{\ot n} \ot \L^2(\R^m)^{\ot n} \cong \L^2(\R^{2m})^{\ot n}$.

    We will now prove that this channel has the desired property when acting on~$\sigma^{\ot n}$ for any gauge-invariant Gaussian state~$\sigma$.
    Since $\sigma$ is gauge-invariant, $\sigma^{\ot n} \in \C\{U_O^{\ot n} : O \in \U(m)\}$.
    Thus,
    \begin{align*}
        \sigma^{\ot n} \cong \bigoplus_{k=0}^\infty \rho_k^{(n)},
    \end{align*}
    is block-diagonal with respect to \cref{eq:particle number multicopy} and each $\rho_k^{(n)} \in \C G_k^{(n)}$.
    Note that, even though the Hilbert spaces~$\L^2(\R^m)$ and~$\L^2(\R^m)^{\ot n}$ are infinite-dimensional, the standard purifications are well-defined and (recall that we work with respect to the occupation number bases) can be decomposed~as
    \begin{align*}
        \ket{\psi^\text{std}_\sigma}^{\ot n} = \ket{\psi^\text{std}_{\sigma^{\ot n}}} \cong \bigoplus_{k=0}^\infty \ket{\psi^\text{std}_{\rho_k^{(n)}}}
    \in \bigoplus_{k=0}^\infty \cH_k^{(n)} \ot \cH_k^{(n)} \subseteq L^2(\R^{2m})^{\ot n}.
    \end{align*}
    We now compute the twirl over the passive Gaussian unitaries:
    \begin{align*}
        \int_{O \in \U(m)} \parens*{ I \ot U_O^{\ot n} } \psi^{\text{std},\ot n}_\sigma \parens*{ I \ot U_O^{\dagger, \ot n} } \ \dd O
    &\cong  \int_{O \in \U(m)} \parens*{ I \ot U_O^{\ot n} } \bigoplus_{k,\ell=0}^\infty \ketbra{\psi^\text{std}_{\rho_k^{(n)}}}{\psi^\text{std}_{\rho_\ell^{(n)}}} \parens*{ I \ot U_O^{\dagger, \ot n} } \ \dd O \\
    &=  \bigoplus_{k=0}^\infty \int_{O \in \U(m)} \parens*{ I \ot U_O^{\ot n} } \psi^\text{std}_{\rho_k^{(n)}} \parens*{ I \ot U_O^{\dagger, \ot n} } \dd O \\
    &=  \bigoplus_{k=0}^{\infty} \cP_k^{(n)}\bigl[ \rho_k^{(n)} \bigr] = \cP_{n,m}^\text{Boson}[\sigma^{\ot n}],
    \end{align*}
    where the second line follows because $\exp(i N \theta)$ is a passive Gaussian unitary (corresponding to~$e^{i\theta} I \subseteq \U(m)$) and twirling over this one-parameter subgroup is equivalent to measuring the total particle number~$k$, and for the third line we use \cref{eq:cPkn} for $\rho_k^{(n)} \in \C G_k^{(n)}$ and then the definition of the bosonic Gaussian purification channel.
    Since the standard purification is a Gaussian state, as discussed in \cref{sub:boson prelims}, the results follows.
\end{proof}

%-----------------------------------------------------------------------------
\section*{Acknowledgments}
%-----------------------------------------------------------------------------
We would like to thank Ben Lovitz, Sepehr Nezami, and John Wright for insightful discussions on highest weight orbit learning, random purifications and tomography.
We are grateful to Francesco Anna Mele, Filippo Girardi, Senrui Chen, Marco Fanizza, and Ludovico Lami for sharing a preliminary version of their work~\cite{relatedwork}.
Part of this work was conducted while MW was visiting Q-FARM and the Leinweber Institute for Theoretical Physics at Stanford University and the Simons Institute for the Theory of Computing at UC Berkeley.
This work is supported by the European Union (ERC Grants SYMOPTIC, 101040907 and ASC-Q, 101040624), by the German Federal Ministry of Research, Technology and Space (QuSol, 13N17173), and by the Deutsche Forschungsgemeinschaft (DFG, German Research Foundation, 556164098).

\bibliographystyle{alpha}
\bibliography{library}

\end{document}